\newcommand{\calA}{{\ensuremath{\cal A}}}
\newcommand{\path}{\ensuremath{\mbox{\it path}}}
\renewcommand{\path}{\ensuremath{\mbox{\it path}}}
\newcommand{\argmin}{\ensuremath{\mbox{argmin}}}
\newcommand{\argmax}{\ensuremath{\mbox{argmax}}}
\newcommand{\horSegment}{\rule[2pt]{30pt}{0.5pt}}
\newif\iffull
\date{}
\title{EPG-rep\-re\-sen\-ta\-tions with small grid-size\thanks{Work done during
the 5th Workshop on Graphs and Geometry, Bellairs Research Institute.
The authors would like to thank the other participants, and especially
G\"{u}nter Rote, for helpful input.  Research of TB, VD and PM supported
by NSERC.  Research of MD supported by an NSERC Vanier scholarship.}
}
\author{Therese Biedl\inst{1}
\and 
Martin Derka\inst{1}
\and
Vida Dujmovic\inst{2}
\and
Pat Morin\inst{3}
}
\institute{Cheriton School of Computer
Science, Univ.~of Waterloo, Waterloo, Canada
\and
School of Computer Science and Electrical Engineering, Univ.~of Ottawa, Ottawa, Canada
\and
School of Computer Science, Carleton University, Ottawa, Canada}
\begin{document}
\maketitle
\begin{abstract}
In an EPG-representation of a graph $G$, each vertex is represented
by a path in the rectangular grid, and $(v,w)$ is an edge in $G$  if
and only if the paths representing $v$ an $w$ share a
grid-edge. Requiring paths representing edges to be x-monotone or, even stronger,
both x- and y-monotone gives rise to three natural variants of
EPG-rep\-re\-sen\-ta\-tions, one where edges have no monotonicity
requirements and two with the aforementioned monotonicity requirements.
The focus of this paper is understanding how small a grid can be achieved for such
EPG-rep\-re\-sen\-ta\-tions with respect to various graph parameters. 

We show that there are $m$-edge graphs that require a 
grid of area $\Omega(m)$ in any variant of
EPG-rep\-re\-sen\-ta\-tions. Similarly there are pathwidth-$k$ graphs that
require height $\Omega(k)$ and area $\Omega(kn)$  in any variant of
EPG-rep\-re\-sen\-ta\-tions. We prove a matching upper bound of $O(kn)$ area 
for all pathwidth-$k$ graphs in the strongest model, the one where edges are
required to be both x- and y-monotone. Thus in this strongest model, the result implies, for example, $O(n)$, $O(n
\log n)$ and $O(n^{3/2})$ area bounds for bounded pathwidth graphs, bounded
treewidth graphs and all classes of graphs that exclude a fixed minor,
respectively.
For the model with no restrictions on the monotonicity of the edges,
stronger results can be achieved for some graph classes, for example an $O(n)$ area bound for bounded treewidth graphs and $O(n \log^2 n)$
bound for graphs of bounded genus.
\end{abstract}

\section{Introduction}\label{sec:intro}
The {\em $w\times h$-grid} (or {\em grid of width $w$ and height $h$})
consists of all {\em grid-points} $(i,j)$ that have integer coordinates $1\leq i\leq w$ and $1\leq j\leq h$, and all
{\em grid-edges} that connect grid-points of distance 1. An {\em
  EPG-rep\-re\-sen\-ta\-tion} of a graph $G$ consists of an
assignment of a {\em vertex-path}, $\path(v)$, to every vertex $v$ in
$G$ such that $\path(v)$ is a path in a grid, and $(v,w)$ is an edge
of $G$ if and only if $\path(v)$ and $\path(w)$ have a grid-edge in common.

Since their initial introduction by Go\-lum\-bic et
al.~\cite{GLS09-EPG}, a number of papers concerning EPG-rep\-re\-sen\-ta\-tions
of graphs have been published.  It is easy to see that {\em every} graph has an EPG-rep\-re\-sen\-ta\-tion \cite{GLS09-EPG}.
Later papers asked what graph classes can be represented if the number of bends in the
vertex-paths is restricted (see e.g.  \cite{AS09,BS09b,HKU14,FL16}) or
gave approximation algorithms for graphs with an EPG-rep\-re\-sen\-ta\-tion
with few bends (see e.g. \cite{EGM13,M17}).

The main objective of this paper is to find EPG-rep\-re\-sen\-ta\-tions such
that the size of the underlying grid is small (rather than the number of bends in vertex-paths).
As done by Golumbic et al.~\cite{GLS09-EPG}, we wonder whether
additionally we can achieve monotonicity of vertex-paths.
We say that $\path(v)$ is {\em $x$-monotone} if any vertical line
that intersects $\path(v)$ intersects it in a single interval.
It is {\em $xy$-monotone} if it is $x$-monotone and additionally
any horizontal line that intersects $\path(v)$ intersects it in a single 
interval.  Finally, it is {\em $xy^+$-monotone} if it is monotonically
increasing, i.e., it is $xy$-monotone and the left endpoint is not
above the right endpoint.  
An {\em $x$-monotone EPG-rep\-re\-sen\-ta\-tion} is an EPG-rep\-re\-sen\-ta\-tion where every vertex-path is $x$-monotone, and similarly an {\em $xy^+$-monotone EPG-rep\-re\-sen\-ta\-tion} is an EPG-rep\-re\-sen\-ta\-tion where every vertex-path is $xy^+$-monotone.

It is easy to see that every $n$-vertex graph has an EPG-rep\-re\-sen\-ta\-tion
in an $O(n)\times O(n)$-grid, i.e., with quadratic area.  This is best
possible for some graphs. In Section~\ref{sec:lower}, we study lower bounds and show that there are $m$-edge graphs that require a 
grid of area $\Omega(m)$ in any EPG-rep\-re\-sen\-ta\-tion and that there are pathwidth-$k$ graphs that
require height $\Omega(k)$ and area $\Omega(kn)$  in any EPG-rep\-re\-sen\-ta\-tion. 

Biedl and Stern~ \cite{BS09b}  showed that pathwidth-$k$ graphs have an EPG-rep\-re\-sen\-ta\-tion
of height $k$ and width $O(n)$, thus area $O(kn)$.  In
Section~\ref{sec:pw}, we prove a strengthening of that result. In particular, we show that every
pathwidth-$k$ graph has an $xy^+$-monotone EPG-rep\-re\-sen\-ta\-tion of height $O(k)$
and width $O(n)$ thus
matching the lower bound in this strongest of the models.  This result implies, for example, $O(n)$, $O(n
\log n)$ and $O(n^{3/2})$ area bounds for  $xy^+$-monotone
EPG-rep\-re\-sen\-ta\-tions of bounded pathwidth graphs, bounded
treewidth graphs and all classes of graphs that exclude a minor,
respectively. In fact, the result implies that all hereditary graph
classes with $o(n)$-size balanced separators have $o(n^2)$ area $xy^+$-monotone EPG-rep\-re\-sen\-ta\-tions. 

If the monotonicity requirement is dropped, better area bounds
are possible for some graph classes. For example, in
Section~\ref{sec:epg}, we prove
that graphs of bounded treewidth have  $O(n)$ area EPG-rep\-re\-sen\-ta\-tions
and that graphs of bounded genus (thus planar graphs too) have $O(n
\log^2 n)$  EPG-rep\-re\-sen\-ta\-tions.


\section{Preliminaries}\label{sec:prelim}

Throughout this paper, $G=(V,E)$ denotes a graph with $n$ vertices and
$m$ edges.  We refer, e.g., to \cite{Die12} for all standard notations for
graphs.  
The {\em pathwidth} $pw(G)$ of a graph $G$ is a well-known graph parameter.
Among the many equivalent definitions, we use here the 
following: $pw(G)$ is the smallest $k$ such that there exists
a super-graph $H$ of $G$ that is a $(k{+}1)$-colourable interval graph. 
Here, an {\em interval graph} is
a graph that has an {\em interval representation}, i.e., an assignment
of a (1-dimensional) interval to each vertex $v$ such that there exists an edge
if and only if the two intervals share a point.  
We may without loss of generality assume that the intervals begin
and end at distinct $x$-coordinates in $\{1,\dots,2n\}$, and will do
so for all interval representations used in this paper.
We use $I(v)=[\ell(v),r(v)]$ for the interval representing vertex $v$.
It is well-known that 
\iffull
interval graphs are {\em perfect} (see e.g. 
\cite{Gol04}), which in particular implies that 
\fi
an interval graph
is $k$-colourable if and only if its maximum clique size is $k$.

{\em Contracting} an edge $(v,w)$ of a graph $G$ means deleting 
both $v$ and
$w$, inserting a new vertex $x$, and making $x$ adjacent to all vertices
in $G-\{v,w\}$ that were adjacent to $v$ or $w$.    A graph $H$ is called
a {\em minor} of a graph $G$ if $H$ can be obtained from $G$ by deleting
some vertices and edges of $G$ and then contracting some edges of $G$.
It is known that $pw(H)\leq pw(G)$ for any minor $H$ of $G$.
 
\iffull
To give one example, for every interval graph we can easily find an
EPG-rep\-re\-sen\-ta\-tion (in a grid of height 1), simply by using $I(v)$
as the vertex-path for $v$.
\fi

\section{From proper VPG to EPG}\label{sec:vpg}

A  {\em VPG-representation} of a graph $G$ consists
of an assignment of vertex-paths in the grid to vertices of $G$ such
that $(v,w)$ is an edge of $G$ if and only if $\path(v)$ and $\path(w)$ have a grid-point in common.
Many previous EPG-rep\-re\-sen\-ta\-tion constructions (see e.g.~\cite{GLS09-EPG})
were obtained by starting with a VPG-representation and transforming
it into an EPG-rep\-re\-sen\-ta\-tion by adding a ``bump'' whenever two paths cross.
The lemmas below formalize this idea, and also study how this transformation 
affects the grid-size and whether monotonicity is preserved.  

We give the transformations only for {\em proper} VPG-representations, which
satisfy the following:  (a) Any grid-edge is used by at most one vertex-path.
(b) If a grid-point $p$ belongs to $\path(v)$ and $\path(w)$, then one
of the vertex-paths includes the rightward edge at $p$ and 
the other includes the upward edge at $p$.%
\footnote{The transformation could be done with a 
larger factor of increase if (b) is violated,
but restriction (a) is vital.
\iffull
$K_n$ has an EPG-rep\-re\-sen\-ta\-tion
of area 1, but its subgraph $K_{n/2,n/2}$ requires area $\Omega(n^2)$ as
argued in Corollary~\ref{cor:bipartite}.
\fi
} 
\iffull
Note that this condition
is satisfied if (a) holds and we ensure that there are no ``touchings'' 
of vertex-paths, i.e., any common grid-point implies a true intersection
where neither vertex-path ends.
\fi

\begin{lemma}
\label{lem:transform}
Let $G$ be a graph that has a proper VPG-representation $R_V$ 
in a $w\times h$-grid.
Then any subgraph $G'$ of $G$
has an EPG-rep\-re\-sen\-ta\-tion $R_E$ in a $2w\times 2h$-grid.
Furthermore, 
if $R_V$ is $x$-monotone then $R_E$ is $x$-monotone.
\end{lemma}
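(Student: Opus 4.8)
The plan is to convert the proper VPG-representation $R_V$ into an EPG-representation by replacing each ``crossing'' of two vertex-paths with a small detour (a ``bump'') so that the two paths now share a genuine grid-edge rather than merely a grid-point. The first step is to scale the entire picture by a factor of $2$: map every grid-point $(i,j)$ of $R_V$ to $(2i-1, 2j-1)$ (or some similar affine map), so that the image lives in a $2w \times 2h$-grid and every original grid-edge becomes a length-$2$ path with a fresh intermediate grid-point strictly between its endpoints. This doubling creates the ``room'' needed to insert bumps without disturbing the rest of the drawing, and explains the factor of $2$ in both dimensions of the target grid.

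Next, I would handle the two path-sharing situations dictated by properness. By condition (a), every grid-edge carries at most one vertex-path, so after scaling no two paths can accidentally collide along an edge. By condition (b), whenever $\path(v)$ and $\path(w)$ share a grid-point $p$, one path uses the rightward edge at $p$ and the other the upward edge at $p$ — that is, they cross transversally at $p$. At each such crossing I would reroute one of the two paths (say the horizontal one) through the newly available intermediate grid-points so that, in a unit cell near the image of $p$, the two paths run along a common grid-edge; this turns the shared grid-\emph{point} into a shared grid-\emph{edge}, creating the desired EPG-adjacency. Because the scaling gives each original edge its own buffer of new grid-points, these local modifications are pairwise independent and do not create spurious shared edges between paths that were previously non-adjacent; properness (a) is exactly what guarantees this independence.

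I would then argue correctness of the adjacency relation: two paths share a grid-edge in $R_E$ if and only if they crossed at a grid-point in $R_V$, which by definition of a VPG-representation happens if and only if the corresponding vertices are adjacent in $G$. Restricting from $G$ to a subgraph $G'$ is free, since deleting edges of $G$ only means we may ignore (or simply not record) some of the shared grid-edges, and deleting vertices means dropping the corresponding paths; neither operation can create new adjacencies, so $R_E$ restricted to the paths of $G'$ is a valid EPG-representation of $G'$. For the monotonicity claim, I would check that the local bump can always be chosen so that a path that was $x$-monotone in $R_V$ remains $x$-monotone: a horizontal detour of a single unit cell adds only a bounded vertical wiggle and no backward horizontal movement, so any vertical line still meets the path in one interval.

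The main obstacle I expect is verifying that the bumps can be inserted \emph{consistently and locally} — i.e., that the rerouting at one crossing never interferes with the rerouting at a nearby crossing along the same path, and never creates an unintended shared grid-edge between two non-adjacent paths. This requires a careful choice of which path (the horizontal or the vertical one) gets bumped at each crossing and in which direction, exploiting the fine grid produced by the factor-$2$ scaling; making this choice uniform enough to also preserve $x$-monotonicity is the delicate part of the argument.
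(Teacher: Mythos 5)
Your core construction---double the grid resolution, then at a transversal crossing reroute the path using the rightward edge through a small bump so that the two paths share a grid-edge, with properness (a) guaranteeing the bumps are independent and the bump shape preserving $x$-monotonicity---is exactly the paper's construction. However, there is a genuine gap in how you handle the subgraph $G'$, and it is not a cosmetic one.

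You insert a bump at \emph{every} crossing of two vertex-paths, i.e., for every edge of $G$, and then claim that passing to $G'$ is ``free'' because one ``may ignore (or simply not record) some of the shared grid-edges.'' This is not available in the EPG model: adjacency in an EPG-representation is determined entirely by the geometry---two vertices are adjacent in the represented graph if and only if their paths share a grid-edge. There is no notion of recording or ignoring a shared grid-edge. As written, your $R_E$ is an EPG-representation of $G$ itself, and restricting attention to the paths of the vertices of $G'$ still represents the induced subgraph of $G$, not an arbitrary subgraph $G'$ with edges deleted. The correct (and simple) fix is what the paper does: insert bumps \emph{only} at the crossings corresponding to edges of $G'$. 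Crossings corresponding to edges of $G$ that are absent from $G'$ are left untouched; a mere shared grid-point creates no EPG adjacency, so the result represents precisely $G'$. This distinction is essential for the lemma's intended use: the paper applies it with $G = K_n$, whose proper VPG-representation has a crossing for every pair of vertices, and $G'$ an arbitrary $n$-vertex graph. Under your version, every pair of paths would share a grid-edge and the representation would be of $K_n$, with no way to recover $G'$ afterwards.
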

\begin{proof}
Double the resolution of the grid by inserting a new grid-line after
each existing one.
For each edge $(v,w)$ of $G'$,
consider the two paths $\path(v)$ and $\path(w)$ that represent $v$
and $w$ in $R_V$.  Since $(v,w)$ was an edge of $G$, the vertex-paths 
share a grid-point $(i,j)$ in $R_V$, which corresponds to point 
$(2i,2j)$ in $R_E$.

Since $R_V$ is proper, we may assume  (after possible renaming)
that $\path(v)$ uses the rightward edge at $(2i,2j)$,  and
$\path(w)$ uses the upward edge at $(2i,2j)$.  
Re-route $\path(v)$ by adding a ``bump''
\begin{small}%
$$
\begin{array}{ccccccc}
    & (2i,2j{+}1) & \horSegment & (2i{+}1, 2j{+}1) \\
    & | &  		&  | \\
\horSegment & (2i,2j) & & (2i{+}1,2j) &  \horSegment \\
\end{array}
$$%
\end{small}%
in the first quadrant of $(2i,2j)$.  
See also Figure~\ref{fig:transform1}(a) and (b).
Note that $\path(w)$ is unchanged in the vicinity of
$(2i,2j)$, and the bump added to $\path(v)$ is $x$-monotone.
So if $R_V$ is $x$-monotone then so are the resulting vertex-paths.

Since $R_V$ is proper, no
other vertex-path used $(i,j)$ in $R_V$, and therefore no
other vertex-path in $R_E$ can use any grid-edge of this bump.
Therefore no new adjacencies are created, and $R_E$ is indeed
an EPG-rep\-re\-sen\-ta\-tion of $G'$.\qed
\end{proof}

\begin{figure}[ht]
\hspace*{\fill}
\begin{subfigure}[b]{0.22\linewidth}
	\includegraphics[height=16mm,page=1]{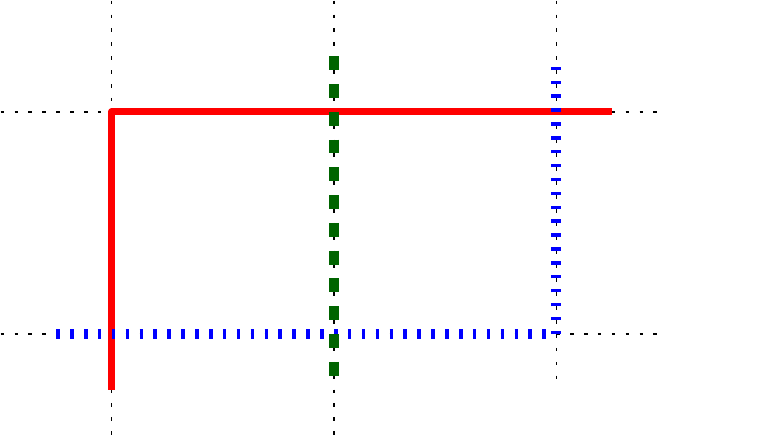}
	\caption{VPG.}
\end{subfigure}
\hspace*{\fill}
\begin{subfigure}[b]{0.22\linewidth}
	\includegraphics[height=16mm,page=2]{transform.pdf}
	\caption{EPG.}
\end{subfigure}
\hspace*{\fill}
\begin{subfigure}[b]{0.22\linewidth}
	\includegraphics[height=16mm,page=3]{transform.pdf}
	\caption{Skewed VPG.}
\end{subfigure}
\hspace*{\fill}
\begin{subfigure}[b]{0.24\linewidth}
	\includegraphics[height=16mm,page=4]{transform.pdf}
	\caption{$xy^+$-mon.~EPG.}
\end{subfigure}
\hspace*{\fill}
\caption{Transforming a proper VPG-representation. 
We only show the transformation for the edge from blue (dotted) to
green (dashed) vertex.
}
\label{fig:transform1}
\end{figure}

We now give a second construction, which is similar in spirit, but
re-routes differently in order to preserve $xy^+$-monotonicity.

\begin{lemma}
\label{lem:transformxyplus}
\label{lem:transformXYmonotone}
Let $G$ be a graph that has a proper VPG-representation $R_V$ 
in a $w\times h$-grid with $xy^+$-monotone vertex-paths.
Then any subgraph $G'$ of $G$
has an $xy^+$-monotone EPG-rep\-re\-sen\-ta\-tion $R_E$ in a $(2w+h)\times 2h$-grid
\end{lemma}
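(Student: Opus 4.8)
The plan is to apply a single shear-and-scale map to $R_V$ and then realize the images of the vertical segments as monotone staircases, choosing the shape of each staircase so that a shared grid-edge appears exactly at the crossings that correspond to edges of $G'$. Concretely, I would use $\sigma(x,y)=(2x+y,\,2y)$. Since every vertex-path of $R_V$ is $xy^+$-monotone it consists only of rightward and upward unit edges, so along any such path both $2x+y$ and $2y$ are non-decreasing; hence the image of each path under $\sigma$ is again monotonically increasing. Under $\sigma$ a rightward edge $(i,j)$--$(i{+}1,j)$ becomes the horizontal, length-$2$ edge from $(2i{+}j,2j)$ to $(2i{+}j{+}2,2j)$, which lies on the \emph{even} row $2j$, while an upward edge $(i,j)$--$(i,j{+}1)$ becomes the vector $(2i{+}j,2j)\to(2i{+}j{+}1,2j{+}2)$, i.e.\ one step to the right and two steps up. As $x$ ranges over $\{1,\dots,w\}$ and $y$ over $\{1,\dots,h\}$ the images have $x$-coordinates in $[3,2w{+}h]$ and $y$-coordinates in $[2,2h]$, so the whole picture fits in a $(2w{+}h)\times 2h$-grid; this is the skewed VPG of Figure~\ref{fig:transform1}(c).

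Next I would realize each $(+1,+2)$ step as an $xy^+$-monotone staircase from $(2i{+}j,2j)$ to $(2i{+}j{+}1,2j{+}2)$, and the key point is that there are two useful shapes with identical endpoints. For a step whose crossing corresponds to an edge of $G'$ I route it right-then-up-up, so that its horizontal unit edge $(2i{+}j,2j)$--$(2i{+}j{+}1,2j)$ lies on the even row $2j$; for every other step (crossings not in $G'$, and steps meeting no horizontal path at all) I route it up-then-right-then-up, so that its single horizontal unit edge lies on the \emph{odd} row $2j{+}1$. Both shapes are monotonically increasing. Because every horizontal path lives on an even row while the second, ``escaping'' shape hides its horizontal edge on an odd row, the two options let me switch each crossing on or off completely independently, even for two crossings at consecutive rows; this is what makes the construction work for an arbitrary subgraph $G'$, exactly as the bumps of Lemma~\ref{lem:transform} are added only for the edges of $G'$. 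Since consecutive steps of a path meet only in their common endpoint, the staircases glue into a single monotone vertex-path; see Figure~\ref{fig:transform1}(d).

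It then remains to verify that two vertex-paths share a grid-edge in $R_E$ if and only if the corresponding pair is an edge of $G'$. By condition (b), at each crossing point $(i,j)$ of $R_V$ one path $\path(v)$ uses the rightward edge and the other $\path(w)$ uses the upward edge; if $(v,w)\in E(G')$ then the right-then-up-up step of $\path(w)$ shares the grid-edge $(2i{+}j,2j)$--$(2i{+}j{+}1,2j)$ with the length-$2$ horizontal edge of $\path(v)$ on row $2j$, and if $(v,w)\notin E(G')$ the up-then-right-then-up step meets row $2j$ only in the single point $(2i{+}j,2j)$, so no grid-edge is shared. The main obstacle, and the part needing the most care, is ruling out \emph{unintended} coincidences: no horizontal path may run along a staircase edge it should not, and no two staircases of distinct vertical segments may overlap. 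I expect this to follow from properness together with a parity argument: horizontal edges occur only on even rows, the suppressed horizontal edges sit on odd rows, and the vertical pieces of the staircases for distinct original columns $i\neq i'$ lie at $x$-coordinates differing by at least $2$ after scaling, so any remaining coincidence would force two vertex-paths through a common grid-point of $R_V$, which by conditions (a) and (b) can only happen at a genuine crossing. Finally, $\sigma$ and both staircase shapes preserve monotone increase, so every vertex-path of $R_E$ is $xy^+$-monotone, completing the construction.
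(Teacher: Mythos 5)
Your proposal is correct and is essentially the same construction as the paper's: the identical shear $\sigma(x,y)=(2x+y,2y)$, vertical unit edges realized as up--right--up zig-zags by default, and a right--up--up re-routing at exactly those crossings that correspond to edges of $G'$ so that a horizontal grid-edge on an even row is shared; properness rules out unintended overlaps just as in the paper. The only difference is presentational (the paper first builds the skewed proper VPG-representation with all zig-zags in the ``off'' shape and then re-routes, while you choose each staircase's shape in one pass), which does not change the argument.
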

\begin{proof}
We do two transformations; the first results in a proper VPG-representation
$R_V'$ that has some special properties such that it can then be transformed
into an EPG-rep\-re\-sen\-ta\-tion.

The first transformation is essentially a skew.  Map each grid-point $(i,j)$
of $R_V$ into the {\em corresponding point} $(2i+j,2j)$.  Any horizontal
grid-edge used by a vertex-path is mapped to the corresponding 
horizontal grid-edge, i.e., we map a horizontal grid-edge $(i,j)-(i{+}1,j)$
of $R_V$  into the length-2 horizontal segment $(2i+j,2j)-(2(i+1)+j,2j)$
that connects the corresponding points. 
Every vertical grid-edge 
$(i,j)-(i,j{+}1)$ is mapped into the zig-zag path
\begin{small}
$$
\begin{array}{ccccccc}
    &			&	& (2i{+}(j{+}1), 2(j{+}1) \\
&		&	& | \\
    & (2i{+}j,2j{+}1) & \horSegment & (2i{+}j{+}1, 2j{+}1) \\
    & | &  		\\
    & (2i{+}j,2j) & \\
\end{array}
$$%
\end{small}%
that connects the corresponding points.    See also 
Figure~\ref{fig:transform1}(a) and (c).
It is easy to verify that this is again a proper VPG-representation of
exactly the same graph, and vertex-paths are again $xy^+$-monotone.

Now view $R_V'$ as an EPG-rep\-re\-sen\-ta\-tion.  Since $R_V'$ is proper, currently
no edge is represented.  We now modify $R_V'$ such that intersections are
created if and only if an edge exists.  Consider some
edge $(v,w)$ of $G'$.  Since it is an edge of $G$, there must exist a point
$(i,j)$ in $R_V$ where $\path(v)$ and $\path(w)$ meet.  
Since $R_V$ is proper, we may assume  (after possible renaming)
that $\path(v)$ uses the rightward edge at $(i,j)$ while
$\path(w)$ uses the upward edge at $(i,j)$.  Consider the corresponding
point $(2i+j,2j)$ in $R_V'$, and observe that $\path'(v)$ and $\path'(w)$
(the vertex-paths in $R_V'$) use its incident rightward and upward edges, respectively.
Moreover, $\path'(w)$ uses the ``zig-zag'' $(2i+j,2j)-(2i+j,2j+1)-(2i+j+1,2j+1)$.
We can now re-route the vertex-path of $w$ to use instead
$(2i+j,2j)-(2i+j+1,2j)-(2i+j+1,2j+1)$, i.e., to share the horizontal edge
with $\path'(w)$ and then go vertically.  
See Figure~\ref{fig:transform1}(d).
Thus the two paths now share a
grid-edge.  Since no other vertex-paths used $(i,j)$ in $R_V$,  this
re-routing does not affect any other intersections and overlaps.    So
we obtain an EPG-rep\-re\-sen\-ta\-tion of $G$, and one easily verifies that it
is $xy^+$-monotone. \qed
\end{proof}

\begin{theorem}
Every graph $G$ 
with $n$ vertices has an $xy^+$-monotone EPG-re\-pre\-sen\-ta\-tion 
in a $3n\times 2n$-grid.
\end{theorem}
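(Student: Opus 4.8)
The plan is to reduce the statement to the construction of a single ``universal'' VPG-representation and then invoke Lemma~\ref{lem:transformxyplus}. Observe that every $n$-vertex graph is a subgraph of $K_n$, and that Lemma~\ref{lem:transformxyplus} converts a proper $xy^+$-monotone VPG-representation of $K_n$ in a $w\times h$-grid into an $xy^+$-monotone EPG-representation of \emph{any} subgraph in a $(2w+h)\times 2h$-grid. Choosing $w=h=n$ gives a $(2n+n)\times 2n = 3n\times 2n$-grid. Hence it suffices to build a proper $xy^+$-monotone VPG-representation of $K_n$ inside an $n\times n$-grid; applying the lemma to the target graph as a subgraph then yields the theorem.

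For the VPG-representation the natural candidate is a family of L-shaped (``hook'') paths placed along the diagonal. For each $i$, I would let $\path(v_i)$ run horizontally along row $i$ and then vertically along column $i$, with the bend at $(i,i)$; concretely $\path(v_i)$ goes right from $(1,i)$ to $(i,i)$ and then up to $(i,n)$. Then for $i<j$ the paths $\path(v_i)$ and $\path(v_j)$ meet in the single grid-point $(i,j)$, producing exactly the $n(n-1)/2$ adjacencies of $K_n$ and no others. Each path is monotonically increasing, hence $xy^+$-monotone, and because the horizontal parts lie in distinct rows and the vertical parts in distinct columns, no grid-edge is used by two paths, giving properness condition~(a).

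It remains to verify properness condition~(b) at every shared grid-point, and this is where I expect the \textbf{main obstacle} to lie. At the crossing $(i,j)$ the path $\path(v_i)$ is locally vertical and $\path(v_j)$ is locally horizontal, so one should supply the upward edge and the other the rightward edge at $(i,j)$, which is exactly~(b). The subtlety is that~(b) demands the vertical path to actually \emph{continue} upward past the crossing (so that it owns the upward edge) and the horizontal path to continue rightward; this fails precisely at the crossings lying in the top row or the rightmost column, where one of the two paths ends at the crossing and therefore offers no outgoing edge in the required direction. I would resolve this by routing the hooks so that every path extends at least one grid-step beyond each of its crossings in the relevant direction --- for instance by staggering the rows and columns so that the topmost and rightmost crossings become interior --- while keeping the entire picture within an $n\times n$-grid. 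Getting this boundary behaviour correct, so that~(b) holds at every shared grid-point \emph{without} enlarging the grid past $n\times n$, is the crux; monotonicity, condition~(a), and the count of adjacencies are then routine. With the VPG-representation in hand, Lemma~\ref{lem:transformxyplus} delivers the claimed $xy^+$-monotone EPG-representation in a $3n\times 2n$-grid for every subgraph of $K_n$, that is, for every $n$-vertex graph.
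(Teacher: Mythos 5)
Your reduction is exactly the paper's own: construct a proper $xy^+$-monotone VPG-representation of $K_n$ in an $n\times n$-grid, then apply Lemma~\ref{lem:transformxyplus} with $w=h=n$ to get a $(2n+n)\times 2n=3n\times 2n$-grid EPG-representation of any subgraph. The arithmetic and your verification of properness condition~(a) are fine. The problem is that the one nontrivial step --- the $n\times n$ VPG-representation itself --- is never completed. As you note, with bends at $(i,i)$ condition~(b) fails at every crossing of $\path(v_i)$ with $\path(v_n)$, $i<n$: these crossings sit at $(i,n)$ in the top row, where the vertical arm of $\path(v_i)$ ends and supplies no upward edge. (In fact only the top row is at fault; the rightmost column never hurts you, since the crossing $(i,j)$ has $x$-coordinate $i\le j-1$, strictly left of the bend of $\path(v_j)$, so the rightward edge is always present.) Your proposed repair, ``staggering the rows and columns so that the topmost and rightmost crossings become interior,'' is a statement of the goal rather than a construction, and you explicitly leave it as the crux. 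Since everything else is a direct appeal to the lemma, this is a genuine gap: the unproved step is the entire content of the theorem beyond Lemma~\ref{lem:transformxyplus}.

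The gap is closable, and the paper closes it with precisely the staggering you gesture at, so it is worth recording what the working construction looks like. The paper uses $\Gamma$-shapes: the corner of vertex $i$ is at $(i-1,i)$, its vertical arm runs down to $y=1$, and its horizontal arm runs right to $x=n$; vertex $1$'s unneeded grid-edge $(0,1)-(1,1)$ is dropped so that only $n$ columns are used. For $i<j$ the two paths cross only at $(j-1,i)$, where vertex $i$'s horizontal arm still owns the rightward edge (every horizontal arm reaches $x=n$, while every crossing has $x$-coordinate at most $n-1$) and vertex $j$'s vertical arm still owns the upward edge (its crossings lie at heights $i<j$, strictly below its corner). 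Translated into your orientation, the fix reads: put the bend of $v_i$ at $(i,i-1)$ rather than $(i,i)$, run every vertical arm all the way to the top row $y=n$, and place the horizontal arm of $v_i$ in row $i-1$, so that $v_1$ degenerates to the vertical segment from $(1,1)$ to $(1,n)$; then every crossing lies in a row at most $n-1$ and strictly left of the bend of the horizontal arm through it, so condition~(b) holds everywhere inside $[1,n]\times[1,n]$. The idea you were missing is that the family of arms that must overshoot its crossings has to be run to the grid boundary itself, while the diagonal staggering keeps every crossing strictly away from that boundary row --- a one-line modification, but it is exactly the content of the paper's proof.
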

\begin{proof}
It is very easy to create a proper VPG-representation of the complete
graph $K_n$ in an $n\times n$-grid, using a $\Gamma$-shape (hence an
$xy^+$-monotone vertex-path). Namely, place the corner of the $\Gamma$ of
vertex $i$ at $(i{-}1,i)$ and extending the two arms to $y=1$ and $x=n$.
For vertex $1$, the grid-edge $(0,1)-(1,1)$ is not needed and
can be omitted to save a column.
See Figure~\ref{fig:completeGraph}.
Since $G$ is a subgraph of $K_n$, the result then follows by
Lemma~\ref{lem:transformxyplus}.\qed
\begin{figure}[ht]
\hspace*{\fill}
\includegraphics[width=0.35\linewidth,page=3,trim=0 0 200 50,clip]{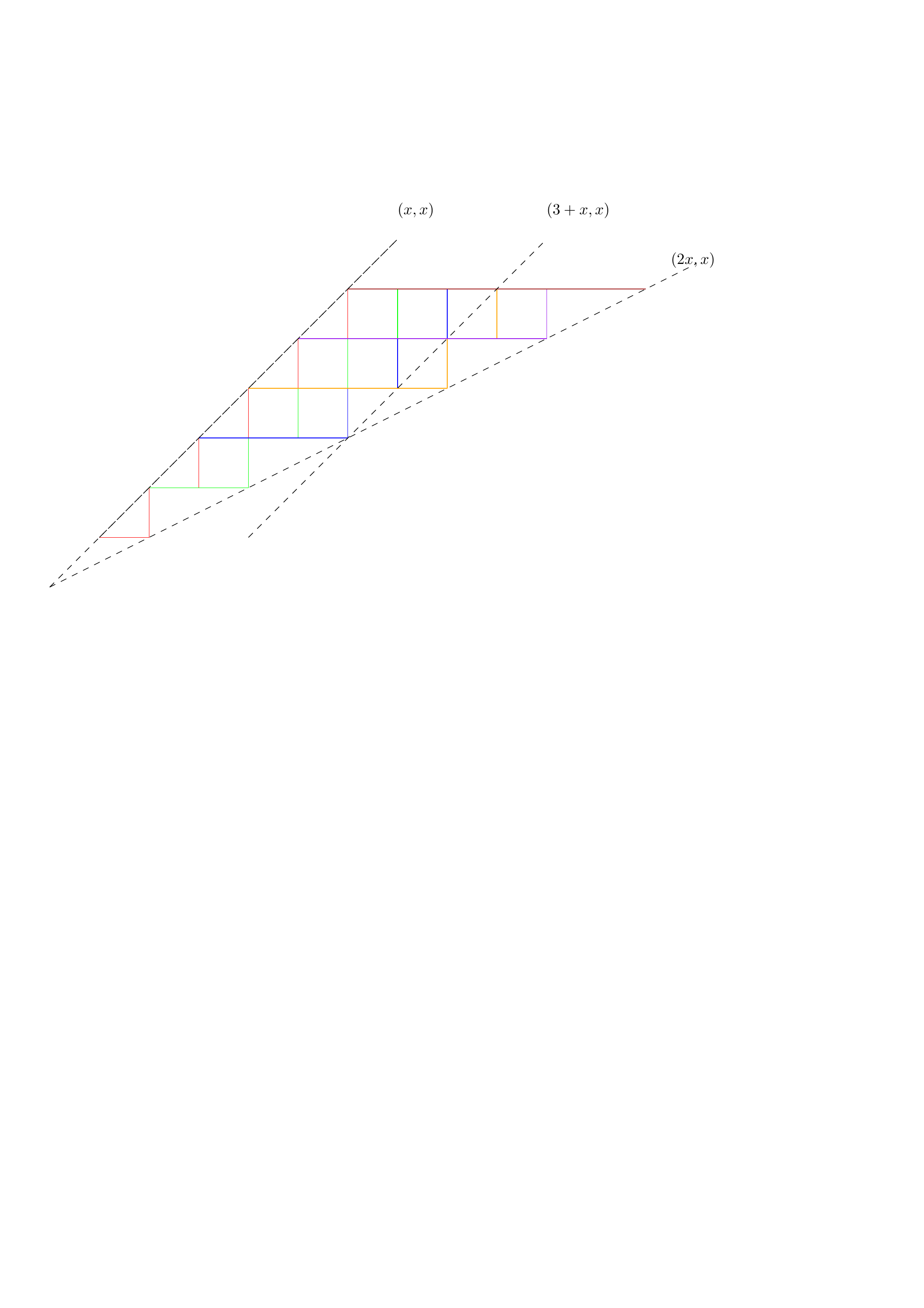}
\hspace*{\fill}
\includegraphics[width=0.55\linewidth,page=4,trim=0 0 120 50,clip]{completeGraph.pdf}
\hspace*{\fill}
\caption{A VPG-representation of $K_n$, and an EPG-rep\-re\-sen\-ta\-tion for
any graph.  In gray areas vertex-paths may get re-routed
to create shared grid-edges.}
\label{fig:completeGraph}
\end{figure}
\end{proof}

Contrasting this with existing results, 
it was already known that any graph has an EPG-rep\-re\-sen\-ta\-tion
\cite{GLS09-EPG}, but our construction additionally imposes
$xy^+$-monotonicity, and our grid-size is $O(n^2)$, rather
than $O(nm)$.

\section{Lower bounds}\label{sec:lower}

We now turn to lower bounds.  These
hold for arbitrary EPG-rep\-re\-sen\-ta\-tions; we make no use
of monotonicity.

\begin{theorem}\label{thm:numedg}
Let $G$ be a triangle-free graph with $m$ edges.  Then any EPG-rep\-re\-sen\-ta\-tion
of $G$ uses at least $m$ grid-edges (hence a grid of area $\Omega(m)$).
\end{theorem}
\begin{proof}
If $G$ has no triangle, then the maximal clique-size is 2.  Hence no 
grid-edge can belong to three or more vertex-paths.  Consequently,
for every edge $(v,w)$ we must have at least one grid-edge (the one that
is common to $\path(v)$ and $\path(w)$).   No grid-edge belongs to three
vertex-paths, and so there must be at least $m$ grid-edges.\qed
\end{proof}
A consequence of Theorem~\ref{thm:numedg} is that $K_{n,n}$ requires
$\Omega(n^2)$ area in any EPG-rep\-re\-sen\-ta\-tion. 
Later, we relate pathwidth to EPG-rep\-re\-sen\-ta\-tions. For now, we note that $K_{n-k,k}$ is an $n$-vertex triangle-free graph with pathwidth $k$ and $\Theta(kn)$ edges.  Together with Theorem~\ref{thm:numedg}, this implies:

\begin{corollary}
\label{cor:pathwidth-area}
For every $k\ge 1$ and every $n\ge 2k$, there exists an $n$-vertex
pathwidth-$k$ graph $G$ for which any EPG-rep\-re\-sen\-ta\-tion of $G$ uses
$\Omega(kn)$ grid-edges (hence a grid of area $\Omega(kn)$).
\end{corollary}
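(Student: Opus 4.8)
The plan is to instantiate Theorem~\ref{thm:numedg} with the graph hinted at in the paragraph preceding the corollary, namely the complete bipartite graph $G = K_{n-k,k}$ whose two sides $A$ (of size $n-k$) and $B$ (of size $k$) partition its $n$ vertices. Since $G$ is bipartite it is triangle-free, so Theorem~\ref{thm:numedg} applies verbatim. It therefore remains only to check two quantitative claims: that $G$ has $\Theta(kn)$ edges, and that $G$ has pathwidth exactly $k$.

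The edge count is immediate. We have $m = k(n-k)$, and the hypothesis $n \ge 2k$ gives $n-k \ge k$ and in particular $n-k \ge n/2$, so $m \ge kn/2$; since trivially $m \le kn$, we get $m = \Theta(kn)$. Theorem~\ref{thm:numedg} then guarantees that any EPG-representation of $G$ uses at least $m = \Omega(kn)$ grid-edges, whence the grid has area $\Omega(kn)$. This is the entire content of the conclusion, so no further work is needed once the graph is shown to lie in the required class.

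The only step requiring an actual argument is that $pw(G) = k$, and I would split it into the two inequalities using the interval-graph definition of pathwidth from Section~\ref{sec:prelim}. For $pw(G) \le k$ I would exhibit a $(k+1)$-colourable interval supergraph $H$ of $G$: assign to each vertex of $B$ one long interval spanning the whole line, and to the $j$-th vertex of $A$ the short interval $[2j-1,2j]$, with the $A$-intervals pairwise disjoint. Then every $A$-vertex meets every $B$-vertex, recovering all edges of $G$, while at any point at most the $k$ long intervals plus a single short interval overlap; hence the maximum clique of $H$ has size $k+1$, so $H$ is $(k+1)$-colourable and $pw(G) \le k$, placing $G$ in the class of pathwidth-$k$ graphs. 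For the matching lower bound, I would observe that because $n-k \ge k$ the graph $G$ contains $K_{k,k}$ as a subgraph, and since pathwidth is monotone under taking subgraphs it suffices that $pw(K_{k,k}) = k$ (which follows since its pathwidth is at least its treewidth, equal to $k$), giving $pw(G) \ge k$.

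I do not expect a genuine obstacle here, as this is a corollary whose construction is essentially dictated by the preceding discussion; the least routine part is pinning down the pathwidth, and of that the lower bound $pw(G) \ge k$ is the piece that needs a structural argument rather than an explicit construction, though it reduces cleanly to the well-known value $pw(K_{k,k}) = k$.
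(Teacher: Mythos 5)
Your proposal is correct and follows exactly the paper's route: the paper derives the corollary by noting that $K_{n-k,k}$ is an $n$-vertex triangle-free graph with pathwidth $k$ and $\Theta(kn)$ edges and then invoking Theorem~\ref{thm:numedg}. The only difference is that you spell out the details the paper leaves implicit (the edge count under $n\ge 2k$ and both directions of the pathwidth claim), and these verifications are sound.
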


One wonders whether there are graphs that have only a linear number of
edges and still require a big, even quadratic, area.  The following
lower bound, also based on pathwidth, allows us to answer this question
in the affirmative.

\begin{theorem}
\label{thm:pathwidth}
Let $G$ be a 
graph that has an EPG-rep\-re\-sen\-ta\-tion in a grid with $h$ rows
and for which any grid-edge is used by at most $c$ vertex-paths.
Then $pw(G)\leq c(3h-1)-1$.
\end{theorem}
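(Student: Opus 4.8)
The plan is to construct an interval super-graph $H$ of $G$ and bound its maximum clique size, then invoke the pathwidth definition used in this paper. Given the EPG-representation, I would assign to each vertex $v$ the interval $I(v)=[\ell(v),r(v)]$, where $\ell(v)$ and $r(v)$ are the smallest and largest $x$-coordinates of any grid-point on $\path(v)$, and let $H$ be the interval graph of this family. First I would check $G\subseteq H$: if $(v,w)\in E$, then $\path(v)$ and $\path(w)$ share a grid-edge, and this grid-edge has an $x$-coordinate lying in both $I(v)$ and $I(w)$, so $I(v)\cap I(w)\neq\emptyset$ and $(v,w)$ is an edge of $H$. Thus $H$ is an interval super-graph of $G$, and since an interval graph is $\omega(H)$-colourable (where $\omega(H)$ is its maximum clique size), it suffices to prove $\omega(H)\le c(3h-1)$; then $H$ is a $c(3h-1)$-colourable interval super-graph of $G$, and the definition gives $pw(G)\le c(3h-1)-1$.

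Next I would reduce bounding $\omega(H)$ to a counting problem on a single grid-column. A clique of $H$ is a set of pairwise-intersecting intervals; since all $\ell(v),r(v)$ are integers, the common intersection $[\max\ell,\min r]$ of such a set is a nonempty interval with integer endpoints and hence contains an integer $t$. Therefore $\omega(H)$ is at most the maximum, over integers $t$, of the number of vertices $v$ with $t\in I(v)$. Because $\path(v)$ is a connected grid-path whose $x$-extent contains $t$, it must pass through some grid-point $(t,j)$ with $1\le j\le h$ on the vertical line $x=t$.

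The heart of the argument is then a local edge count at column $t$. I would let $S_t$ be the set of grid-edges incident to the line $x=t$: the $h-1$ vertical edges $(t,j)-(t,j{+}1)$ inside the column, together with the $h$ left-incident edges $(t{-}1,j)-(t,j)$ and the $h$ right-incident edges $(t,j)-(t{+}1,j)$, for a total of $|S_t|=(h-1)+2h=3h-1$. Every vertex-path passing through a grid-point $(t,j)$ uses at least one edge incident to $(t,j)$, and all four edges incident to $(t,j)$ lie in $S_t$; hence each vertex $v$ with $t\in I(v)$ forces the use of some edge of $S_t$. Since each grid-edge is used by at most $c$ vertex-paths, the number of such vertices is at most $c\,|S_t|=c(3h-1)$, which is exactly the bound on $\omega(H)$ we wanted.

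I expect the main obstacle to be handling the boundary details cleanly rather than the core count: in particular, verifying that an interval-clique really forces a common integer point, and ensuring that a path whose $x$-extent contains $t$ genuinely uses an $S_t$-edge. The latter relies on every vertex-path containing at least one grid-edge; a degenerate single-point path would represent an isolated vertex, which can be given a disjoint interval in $H$ without creating edges or enlarging any clique, so it does not affect the bound. I would also note that no monotonicity of the paths is used anywhere, consistent with the remark that these lower bounds hold for arbitrary EPG-representations.
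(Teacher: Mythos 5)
Your proof is correct and follows essentially the same approach as the paper: project each vertex-path to its $x$-interval, form the interval supergraph $H$, and bound $\omega(H)$ by the $3h-1$ grid-edges incident to a single column, each used by at most $c$ paths. The only cosmetic difference is how the common column is located (you use the Helly property of intervals plus integrality of endpoints, while the paper uses the standard fact that a maximal clique in an interval graph consists of the intervals containing some left endpoint), and you additionally handle the degenerate single-point-path case, which the paper glosses over.
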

\begin{proof}
For every vertex $v$, define $I(v)$ to be the $x$-projection
of $\path(v)$.  This is
an interval since $\path(v)$ is connected.  Define $H$ to be the interval
graph of these intervals.
If $(v,w)$ is an edge,
then $\path(v)$ and $\path(w)$ share a grid-edge, and hence the 
intervals $I(v)$ and $I(w)$
share at least one point.  So $G$ is a subgraph of $H$.
We claim that $H$ has clique-size $\omega(H)\leq 6h-2$; this implies
the result.

Fix an arbitrary
maximal clique $D$ in $H$.  It is well-known
(see e.g.~\cite{Gol04}) that, in the projected interval-representation,
there exists a vertex $v$ such that $D$ corresponds to those vertices whose
intervals intersect the left endpoint $\ell(v)$.  
Hence for any vertex $w$ in $D$,
at least one grid-edge of $\path(w)$ is incident to a grid-point with 
$x$-coordinate $\ell(v)$.
There are only $3h-1$ such grid-edges ($2h$ horizontal ones and $h-1$
vertical ones), and each of them can belong to at most $c$ vertex-paths.
Hence $|D|\leq c(3h-1)$, which proves the claim.\qed
\end{proof}

In particular, if $G$ is triangle-free then no three vertex-paths can
share a grid-edge.  Applying the theorem with $c=2$ for such graphs we get:
\begin{corollary}
\label{cor:pathwidth}
Any triangle-free graph with pathwidth $k$ 
requires an $\Omega(k)\times \Omega(k)$-grid and thus $\Omega(k^2)$
area in any
EPG-rep\-re\-sen\-ta\-tion.
\end{corollary}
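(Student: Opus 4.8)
The plan is to derive this corollary directly from Theorem~\ref{thm:pathwidth}, which I may assume, by specializing the parameter $c$ and then reasoning about both dimensions of the grid. The triangle-free hypothesis is exactly what controls $c$: since a triangle-free graph has maximum clique-size $2$, no grid-edge can be shared by three or more vertex-paths (three paths sharing a grid-edge would form a triangle in $G$). Hence we may take $c=2$, and Theorem~\ref{thm:pathwidth} gives $pw(G)\le 2(3h-1)-1 = 6h-3$, where $h$ is the number of rows in any purported EPG-representation. Rearranging, if $pw(G)=k$ then $k \le 6h-3$, so $h \ge (k+3)/6 = \Omega(k)$. This establishes the height lower bound.

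For the width lower bound, the key observation is that the roles of rows and columns are symmetric in an EPG-representation: reflecting the grid across the diagonal line $y=x$ swaps horizontal and vertical grid-edges and turns an EPG-representation with $h$ rows and $w$ columns into one with $w$ rows and $h$ columns, representing the same graph. Therefore the identical argument applied to the transposed representation yields $w \ge (k+3)/6 = \Omega(k)$. So I would state the symmetry explicitly, then invoke Theorem~\ref{thm:pathwidth} a second time with the roles of width and height interchanged.

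Combining the two bounds gives a grid with at least $\Omega(k)$ rows and $\Omega(k)$ columns, hence area $h\cdot w = \Omega(k)\cdot\Omega(k) = \Omega(k^2)$, which is the claimed $\Omega(k)\times\Omega(k)$-grid and $\Omega(k^2)$ area. The main (and only real) obstacle is making the transpose argument airtight: I must confirm that Theorem~\ref{thm:pathwidth}, as stated, genuinely bounds pathwidth in terms of the number of \emph{rows} without any hidden dependence on columns, so that applying it to the reflected representation legitimately bounds pathwidth in terms of the original number of columns. Inspecting the proof of Theorem~\ref{thm:pathwidth}, the clique-size bound $\omega(H)\le c(3h-1)$ counts only grid-edges incident to grid-points on a single vertical line, and this count depends solely on the height $h$; thus the theorem is indeed symmetric under the diagonal reflection, and the transpose argument is valid. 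Everything else is the elementary arithmetic of solving $k\le 6h-3$ for $h$, which I would not belabor.
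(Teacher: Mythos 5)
Your proposal is correct and follows exactly the paper's approach: the paper also derives the corollary by applying Theorem~\ref{thm:pathwidth} with $c=2$ (justified by triangle-freeness), with the width bound obtained by the same row/column symmetry you describe. The only difference is that the paper leaves the transpose argument implicit, whereas you spell it out and verify that the theorem's bound depends only on the number of rows---a worthwhile check, but not a different proof.
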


So all that remains to do for a better lower bound is to find a graph that
has few edges yet high pathwidth.  For this, we use {\em expander-graphs},
which are graphs such that for any vertex-set $S$ the ratio between the
boundary of $S$ (the number of vertices in $S$ with neighbors in $V-S$)
and $|S|$ is bounded from below.

\begin{theorem}
There are $n$-vertex graphs with $O(n)$ edges for which any EPG-rep\-re\-sen\-ta\-tion
requires $\Omega(n^2)$ area.
\end{theorem}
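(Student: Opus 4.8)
The plan is to combine two facts established earlier: the existence of sparse, high-pathwidth graphs (expanders) and the pathwidth lower bound of Theorem~\ref{thm:pathwidth}. Concretely, I would first invoke a standard result that there exist $n$-vertex graphs of bounded degree (say, $3$-regular) with $O(n)$ edges and pathwidth $\Omega(n)$; such expander-based constructions are classical, and since bounded-degree expanders have treewidth $\Omega(n)$, and treewidth lower-bounds pathwidth, they have pathwidth $\Omega(n)$ as well. So fix a family of $n$-vertex graphs $G$ with $O(n)$ edges and $pw(G) = \Omega(n)$.

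Next I would apply the contrapositive of Theorem~\ref{thm:pathwidth} to such a $G$. Suppose $G$ had an EPG-rep\-re\-sen\-ta\-tion in a grid with $h$ rows in which every grid-edge is used by at most $c$ vertex-paths. The theorem gives $pw(G)\leq c(3h-1)-1$. Since $pw(G)=\Omega(n)$, this forces $c(3h-1) = \Omega(n)$, hence $ch = \Omega(n)$. The remaining issue is that this bounds the product $c\cdot h$, not the grid area, which is (width)$\times h$. To convert the bound on $c$ into a bound on width, I would observe that if every grid-edge carries at most $c$ vertex-paths, then the number of (grid-edge, vertex-path) incidences is at most $c$ times the number of grid-edges, i.e. $O(c\cdot w\cdot h)$ where $w$ is the width; on the other hand each vertex-path must traverse enough grid-edges that the total incidence count is at least $\Omega(m') $ for the represented edges. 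I would instead prefer the cleaner route below.

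The cleaner route avoids reasoning about $c$ directly: observe that a grid with $h$ rows automatically has every grid-edge used by at most $c\le h$ vertex-paths is \emph{not} what we want; instead, for a triangle-free graph we already have $c\le 2$ by the argument preceding Corollary~\ref{cor:pathwidth}. So I would choose the expander family to be \emph{triangle-free} (bounded-degree triangle-free expanders of pathwidth $\Omega(n)$ exist, e.g. by taking a bipartite expander or by subdividing edges, which preserves $O(n)$ edges and keeps pathwidth linear up to constants). For such a graph $c=2$, and Theorem~\ref{thm:pathwidth} yields $\Omega(n)=pw(G)\le 2(3h-1)-1$, forcing $h=\Omega(n)$. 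Symmetrically, the same argument applied to the $y$-projection (by the obvious transpose of Theorem~\ref{thm:pathwidth}, bounding pathwidth in terms of the number $w$ of \emph{columns}) gives $w=\Omega(n)$. Therefore the area $w\cdot h = \Omega(n^2)$, as claimed, while the graph has only $O(n)$ edges.

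The main obstacle is the quantitative input: one must exhibit a triangle-free (or at least bounded-clique-number) $n$-vertex graph with $O(n)$ edges and pathwidth $\Omega(n)$, and one must confirm that the pathwidth lower bound can be run in \emph{both} the horizontal and vertical directions to bound width and height separately. The existence of bounded-degree expanders with linear treewidth (hence linear pathwidth) is the key external fact; the triangle-freeness is arranged by an edge-subdivision or bipartite construction that blows up $n$ only by a constant factor and leaves pathwidth linear. The transpose version of Theorem~\ref{thm:pathwidth} is immediate by symmetry of the grid, so the only genuine work is assembling these pieces and checking the constants.
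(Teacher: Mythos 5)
Your ``cleaner route'' is exactly the paper's proof: take a maximum-degree-3 expander (which has $O(n)$ edges and pathwidth $\Omega(n)$), subdivide its edges to make it triangle-free while keeping $O(n)$ vertices and edges and --- since the original graph is a minor of the subdivision --- pathwidth $\Omega(n)$, then apply Theorem~\ref{thm:pathwidth} with $c=2$ (i.e., Corollary~\ref{cor:pathwidth}) and its transpose to force both height and width $\Omega(n)$. The only difference is cosmetic: the first half of your write-up explores and discards a weaker incidence-counting route before arriving at the argument the paper actually uses.
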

\begin{proof}
It is known that expander-graphs of maximum degree 3 exist (see e.g.~\cite{MSS13})
Let $G$ be one such graph. It hence has $O(n)$ edges. Since $G$ is an
expander, it has pathwidth $\Omega(n)$ (see e.g. \cite{GM09}).
Subdivide all edges of $G$ to obtain a bipartite graph $G'$ that has
$O(n)$ vertices and edges. This operation
cannot decrease the pathwidth since $G$ is a minor of $G'$.
So $pw(G')\in \Omega(n)$.  Since $G'$ is triangle-free, any EPG-rep\-re\-sen\-ta\-tion
of $G'$ must have height $\Omega(n)$, and a symmetric argument shows that
it must have width $\Omega(n)$.\qed
\end{proof}

\section{Upper bounds on $xy^+$-monotone EPG representations}\label{sec:pw}

Corollaries~\ref{cor:pathwidth-area} and \ref{cor:pathwidth}
imply that the best upper-bounds for EPG-rep\-re\-sen\-ta\-tions
in terms of pathwidth have height $\Omega(k)$ and area $\Omega(kn)$.
Naturally, one wonders whether this bound can be matched.
As noted in the introduction, Biedl
and Stern showed that any graph with pathwidth $k$ has an EPG-rep\-re\-sen\-ta\-tion
of height $k$ and area $O(kn)$ \cite{BS09b}. 
We now use a completely different approach to strengthen their result
and obtain $xy^+$-monotone EPG-rep\-re\-sen\-ta\-tions of pathwidth $k$ graphs
with optimal height $O(k)$ and optimal area $O(kn)$.

\begin{theorem}
\label{thm:pwXYmonotone}
Every graph $G$ of pathwidth $k$
has an $xy^+$-monotone EPG-rep\-re\-sen\-ta\-tion 
of height $8k+O(1)$ and width $O(n)$, thus with $O(kn)$ area.
\end{theorem}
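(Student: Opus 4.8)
The plan is to build a \emph{proper $xy^+$-monotone VPG-representation} of an interval super-graph of $G$ and then invoke Lemma~\ref{lem:transformxyplus}. Concretely, by the pathwidth definition from Section~\ref{sec:prelim} there is an interval graph $H\supseteq G$ whose maximum clique has size at most $k+1$, given by an interval representation $I(v)=[\ell(v),r(v)]$ with distinct endpoints in $\{1,\dots,2n\}$. I would first construct a proper VPG-representation $R_V$ of $H$ with $xy^+$-monotone vertex-paths in a grid of width $w=O(n)$ and height $h=O(k)$. Applying Lemma~\ref{lem:transformxyplus} to the \emph{subgraph} $G\subseteq H$ then yields an $xy^+$-monotone EPG-representation of $G$ in a grid of width $2w+h=O(n)$ and height $2h=O(k)$, as required. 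The essential point is that the transformation installs a bump (and hence a shared grid-edge) only for the edges we actually want; the remaining crossings of $R_V$, which encode the edges of $H\setminus G$, are left untouched and create no adjacency. Thus it suffices to represent the denser graph $H$ as a VPG and let the transform select exactly the edges of $G$.

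For the VPG-representation itself I would use that $H$ has at most $k+1$ pairwise intersecting intervals, so at every $x$-coordinate at most $k+1$ intervals are \emph{active} (contain that coordinate); equivalently, $H$ admits a proper colouring with $k+1$ colours in which each colour class is a set of pairwise disjoint intervals. Using the $2n$ endpoints as $x$-coordinates already fixes the width at $O(n)$. Each vertex is then drawn as a single $xy^+$-monotone staircase (a \emph{rise-then-run} path, in the spirit of the $\Gamma$-shapes used for $K_n$) occupying one of $O(k)$ horizontal levels, routed so that two staircases cross transversally --- one using a rightward grid-edge, the other an upward grid-edge --- exactly when the corresponding intervals overlap. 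Whenever an interval ends, its level is released and later reused by another interval, which is what should keep the total height at $O(k)$ rather than letting it grow with $n$.

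The hard part is reconciling $xy^+$-monotonicity with the $O(k)$ height bound. Because a monotone-increasing path can only move \emph{up}, the obvious scheme --- giving each newly born interval a fresh level above all currently active ones, so that its initial vertical rise crosses them --- forces the levels to cascade and uses height $\Theta(n)$ on a single long ``connected'' interval graph such as a path. The fix is to reuse the levels freed when intervals end, which is legitimate since at most $k+1$ intervals are ever active simultaneously, while still guaranteeing that every overlap is realised as a crossing and that no spurious crossing is introduced. This forces one to choose, for each overlapping pair, \emph{where} along their common $x$-range they cross and \emph{which} of the two rises through the other. Orienting these crossings consistently with the $(k+1)$-colouring --- so that the ``rises above'' relation strictly increases the colour and hence has chains of length at most $k+1$ --- is what should bound the occupied height by $O(k)$. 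Verifying properness (each grid-edge on at most one path, and opposite directions at each shared grid-point) together with the absence of unwanted crossings throughout this routing is the main technical obstacle; the stated height $8k+O(1)$ then arises from the at most $k+1$ levels together with the factor-two blow-ups incurred by the skew and by the bump insertion in Lemma~\ref{lem:transformxyplus}.
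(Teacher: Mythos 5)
Your top-level reduction is exactly the paper's: construct a proper $xy^+$-monotone VPG-representation of (a supergraph of) the interval graph $H$ in an $O(n)\times O(k)$ grid and invoke Lemma~\ref{lem:transformxyplus} on the subgraph $G$, letting the transformation install bumps only for the edges of $G$. The gap is in the core VPG construction, which you correctly flag as the hard part but then resolve with a mechanism that provably fails: orienting crossings consistently with the $(k{+}1)$-colouring does not prevent the cascade, it creates one. Take the path $v_1-v_2-\dots-v_n$ (pathwidth $1$, $2$-coloured alternately) with its standard interval representation, and place each crossing, as you prescribe, inside the common $x$-range of the two intervals; since $I(v_i)$ and $I(v_{i+2})$ are disjoint, the crossing points $p_i$ of the edges $(v_i,v_{i+1})$ have strictly increasing $x$-coordinates. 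If crossings are oriented by colour, then every colour-$1$ vertex uses the upward grid-edge at both of its crossings and every colour-$2$ vertex uses the rightward grid-edge at both of its crossings. An $xy^+$-monotone path that uses the upward edge at a point must leave it upward, so all later points (in particular the next crossing, which has larger $x$) lie at least one row higher; a path that is horizontal at both crossings is merely non-decreasing. Writing $z_i$ for the $y$-coordinate of $p_i$, this gives $z_{2j+1}\ge z_{2j}+1$ and $z_{2j}\ge z_{2j-1}$, hence $z_{2j+1}\ge z_1+j$ and height $\Omega(n)$, not $O(k)$; the opposite colour orientation fails symmetrically. Indeed the orientation that works for a path is position-based (each $v_i$ rises through $v_{i+1}$), and that orientation alternates colours rather than respecting them, so a colour-monotone ``rises above'' relation cannot be the source of the height bound.

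The paper's actual construction is quite different and is worth contrasting with your plan. It is an induction on the number of colours, not a one-pass level assignment with reuse: one extracts a ``farthest-reaching'' induced path $P=a_1,\dots,a_p$ (each $a_{i+1}$ is the interval reaching farthest right among those starting inside $I(a_i)$ and ending after it), whose intervals cover the whole $x$-range, so every maximal clique meets $P$ and $H-P$ is $k$-colourable. The vertices of $P$ are drawn on four rows around the $x$-axis with heights alternating between $\pm1$ and $\pm2$, and, crucially, $\path(a_i)$ extends horizontally \emph{beyond its own interval}, up to $x$-coordinate $2r(a_{i+1})+1$, which is what lets consecutive path-vertices cross without any level ever having to climb. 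The remainder $H-P$ is drawn recursively, pushed two rows outward on each side, and two maintained invariants (every vertex keeps a low horizontal segment spanning its interval, and a vertical segment at its right endpoint crossing the axis) guarantee that all edges between $P$ and $H-P$ are realized; this yields height $4k+O(1)$ before the doubling in Lemma~\ref{lem:transformxyplus}, hence $8k+O(1)$. As written, your proposal has no valid argument for the $O(k)$ height, and the specific fix you propose cannot be repaired without replacing it by something of this kind.
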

\begin{proof}
Recall that $G$ is a subgraph of a
$(k{+}1)$-colourable interval graph $H$.
By Lemma~\ref{lem:transformxyplus}, it suffices to 
show the following:

\begin{lemma}
Let $H$ be a $(k{+}1)$-colourable interval graph
with interval representation $\{I(v)=[\ell(v),r(v)] : v\in V\}$.
There exists a proper VPG-representation 
with $xy^+$-monotone vertex-paths 
of a supergraph of $H$ such that
\begin{enumerate}
\item all vertex-paths are contained within the $[2,2n+1]\times [-2k-2,2k+1]$-grid
	(more precisely, the $x$-range is $[2\min_{v\in V} \ell(v), 1+2 \max_{v\in V} r(v)]$); 
\item $\path(v)$ contains a horizontal segment whose $x$-range is 
	$[2\ell(v),2r(v)]$ and whose $y$-coordinate is negative; and
\item some vertical segment $\path(v)$ includes the segment 
	$\{2r(v)\} \times [-1,1]$.
\end{enumerate}
\end{lemma}

We prove the lemma by induction on $k$.
We may 
assume that $H$ is connected, for if it is not, then obtain representations
of each connected component separately and combine them.  The $x$-ranges
of intervals of each component are disjoint (else there would be an edge), and
so the representations of the components do not overlap by (1).

The claim is straightforward for $k=0$:  Since $H$ is connected and 
1-colourable, it has only one vertex $v$. Set $\path(v)$ to use the two
segments $[2\ell(v),2r(v)]\times \{-1\}$ 
and $\{2r(v)\}\times [-1,1]$.  All claims hold.

Now assume that $k\geq 1$.
We find a path $P$ of ``farthest-reaching'' intervals as follows.
Set $a_1:=\argmin_{v\in V} \ell(v)$, i.e., $a_1$ is the interval that
starts leftmost.  Assume $a_i$ has been defined for some $i\geq 1$.  Let
$\calA_i$ be the set of all vertices $v$ with $\ell(a_i)<\ell(v)<r(a_i)<r(v)$.
If $\calA_i$ is empty then stop the process; we have reached the last vertex
of $P$.  Else, set $a_{i+1}:=\argmax_{v\in \calA_i} r(v)$ to be the vertex
in $\calA_i$ whose interval goes farthest to the right, and
repeat.    See also Figure~\ref{fig:choosePath}.
Let $P=a_1,a_2,\dots,a_p$ be the path that we obtained (this is indeed
a path since $I(a_i)$ intersects $I(a_{i+1})$ by definition).

\begin{figure}[ht]
\hspace*{\fill}
\includegraphics[width=0.8\linewidth,page=3,trim=20 40 20 90,clip]{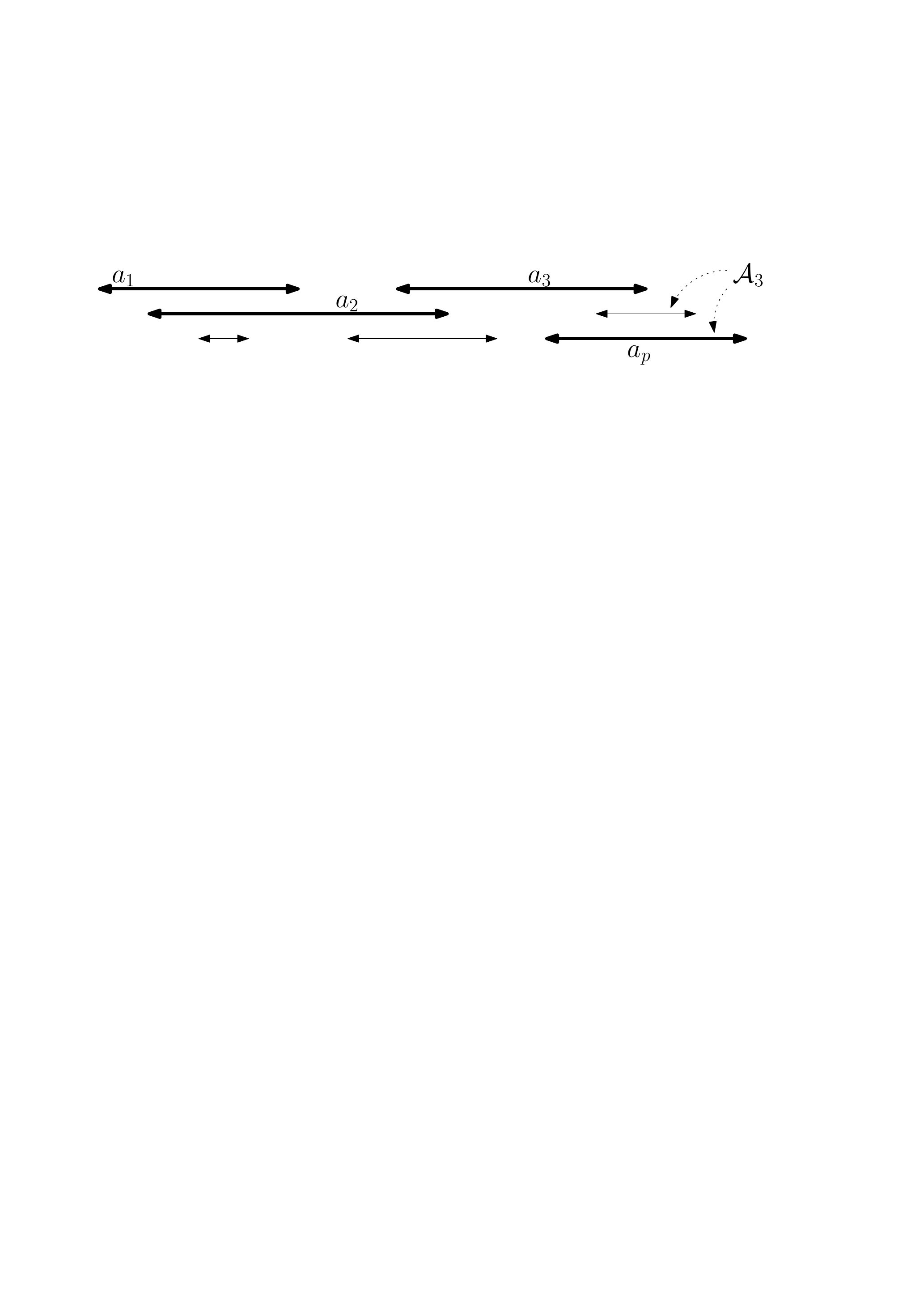}
\hspace*{\fill}
\newline
\newline
\hspace*{\fill}
\includegraphics[width=0.8\linewidth,page=4,trim=20 40 20 90,clip]{farthestPath.pdf}
\hspace*{\fill}
\caption{An interval graph (bold intervals denote the path $P$ chosen in Theorem~\ref{thm:pwXYmonotone}), and its proper VPG-representation with $x$-monotone vertex-paths.}
\label{fig:choosePath}
\end{figure}

\begin{claim} $P$ is an induced path. \end{claim}
\begin{proof}
It suffices to show that $r(a_i)<\ell(a_{i+2})$
for all $1\leq i\leq p-2$.  Assume for contradiction that $\ell(a_{i+2})
< r(a_i)$ for some $1\leq i\leq p-2$.  We show this contradicts the choice
of $P$ as the vertices that go farthest right.  Namely,
let $j\leq i$ be the smallest index such that $\ell(a_{i+2})
< r(a_j)$.  
If $j>1$ then $\ell(a_{i+2})>r(a_{j-1})>\ell(a_j)$ by
definition of $j$ and ${\cal A}_{j-1}$.  If
$j=1$ then $\ell(a_{i+2}) \geq \min_{v\in V} \ell(v) = \ell(a_1)=\ell(a_j)$, and
the inequality is strict since $i+2\neq 1$.    
Thus in both cases $\ell(a_{i+2})>\ell(a_j)$.  Therefore 
$\ell(a_j)<\ell(a_{i+2})<r(a_j)\leq r(a_{i+1})<r(a_{i+2})$, which implies
$a_{i+2}\in \calA_j$.  By $r(a_{j+1})\leq r(a_{i+1})<r(a_{i+2})$ this 
contradicts the
choice of $a_{j+1}$ as $\argmax_{v\in \calA_j} r(v)$.
\qed
\end{proof}

By definition $a_1$ is the leftmost
interval, i.e., $\ell(a_1)=\min_{v\in V} \ell(v)$.  We claim that $a_p$ is
the rightmost interval, i.e., $r(a_p)=\max_{v\in V} r(v)$. 
Assume for contradiction
that some vertex $v$ has an interval that ends farther right.  By connectivity
we can choose $v$ so that it intersects $I(a_p)$, thus $\ell(v)<r(a_p)<r(v)$.
Let $j\leq p$ be maximal such that $\ell(v)<r(a_j)$.  Similarly, as in the claim, one
argues that $v\in \calA_j$, and therefore $v$, rather than $a_{j+1}$, should
have been added to path $P$.

We are now ready for the construction.  Define $H':=H-P$.  Since the intervals
of $P$ cover the entire range $[\min_{v\in V} \ell(v),\max_{v\in V} r(v)]$,
any maximal clique of $H$ contains a vertex of $P$.  Therefore the maximum
clique-size of $H'$ satisfies $\omega(H')\leq \omega(H)-1$, which implies 
(for an interval-graph) that $\chi(H')\leq \chi(H)-1$, hence $H'$ is $k$-colourable.  Apply induction to
$H'$ (with the induced interval representation)
and let $\Gamma'$ be the resulting VPG-representation.    

Since $\Gamma'$ uses only orthogonal vertex-paths, we can 
insert two rows each above and below the $x$-axis by moving all other bends
up/down appropriately.  Now set $\path(a_i)$ to be
\begin{small}
$$
\begin{array}{ccccccc}
& & (2r(a_i),-Y) & \horSegment & (2r(a_{i+1})+1,Y) \\
& & | \\
 (2\ell(a_i),-Y) & \horSegment & (2r(a_i),-Y) \\
| \\
(2\ell(a_i),-2k-2) & & \\
\end{array}
$$
\end{small}%
where $Y=1$ if $i$ is odd and $Y=2$ if $i$ is even.    We omit the 
rightmost horizontal segment for $i=p$  (because $a_{p+1}$ is undefined).
See also Figure~\ref{fig:createCurve}.

\begin{figure}[ht]
\hspace*{\fill}
\includegraphics[width=\linewidth,page=5]{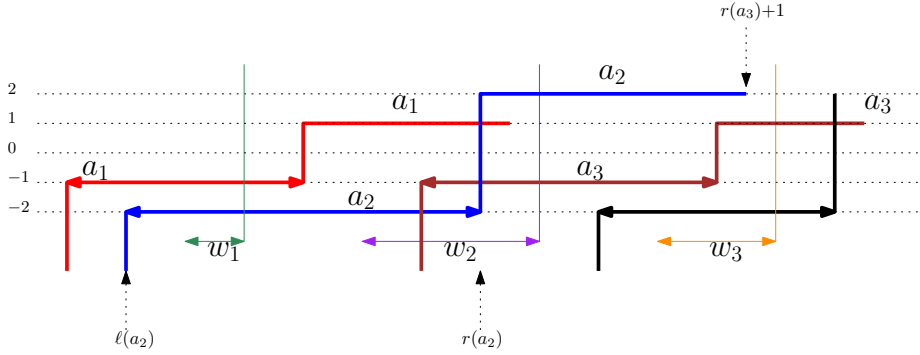}
\hspace*{\fill}
\caption{Representation with $xy$-monotone vertex-paths.}
\label{fig:createCurve}
\end{figure}

Note that these vertex-paths satisfy conditions (2) and (3).  Also note that for
any $1\leq i<p$, the vertex-paths of $a_i$ and $a_{i+1}$ intersect, namely at
$(2r(a_{i+1}),1)$ if $i$ is odd and at $(2\ell(a_{i+1}),-2)$ and $(2r(a_i),-1)$
if $i$ is even.  It remains to show that for any edge $(w,a_i)$ (for some
$1 \leq i \leq p$ and $w\not\in P$) the vertex-paths intersect.  Here we have
three cases (Figure~\ref{fig:createCurve} illustrates the $\ell$th case for edge 
$(w_\ell,a_{\ell+1})$):
\begin{enumerate}
\item If $r(w)<r(a_i)$, then $\ell(a_i)<r(w)$, else the intervals
	would not intersect.    By (3), and since we inserted new rows
	around the $x$-axis, we know that $\path(w)$ contains
	the vertical segment $2r(w) \times [-3,3]$.
	Therefore $\path(a_i)$ intersects this segment at $(2r(w),-Y)$
	where $Y\in \{1,2\}$. 
\item If $\ell(w)<\ell(a_i)$, then $\ell(a_i)<r(w)$, else the intervals
	would not intersect.    By (2), and since we inserted new rows
	around the $x$-axis, we know that $\path(w)$ has
	a horizontal segment $[2\ell(w),2r(w)]\times {-}Y$ for some $Y\geq 3$.  
	Therefore $\path(a_i)$ intersects this segment at $(2\ell(a_i),-Y)$.
\item Finally assume that $\ell(a_i)<\ell(w)$ and $r(a_i)<r(w)$.
	We must have $\ell(w)<r(a_i)$, else the intervals would not
	intersect.  Therefore $w\in \calA_i$. By choice of $a_{i+1}$ 
	we have
	$r(w)\leq \max_{v\in \calA_i} r(v) = r(a_{i+1})$.  
	By (3), and since we inserted new rows
	around the $x$-axis, we know that $\path(w)$ contains
	the vertical segment $2r(w) \times [-3,3]$.  By $r(w)\leq r(a_{i+1})$,
	therefore $\path(a_i)$ intersects this segment at $(2r(w),Y)$
	where $Y\in \{1,2\}$. 
\end{enumerate}
Hence all edges of $H$ are represented by intersection of vertex-paths
and as one easily verifies, these are proper intersections.
This finishes the induction and proves the theorem.
%
\qed
\end{proof}

We can use Theorem~\ref{thm:pwXYmonotone}
to obtain small EPG-representations for other
graph classes.
Graphs of bounded treewidth have pathwidth at most $O(\log n)$ \cite{Bodlaender:1998:PKA:292475.292476}. 
Graphs excluding a fixed minor have treewidth
$O(\sqrt{n})$ \cite{Alon:1990:STG:100216.100254}. 
A graph class has
treewidth $O(n^\epsilon)$
if it is {\em hereditary} (subgraphs also
belong to the class) and has {\em balanced separators} (for any
weight-function on the vertices there exists a small set $S$ such that
removing $S$ leaves only components with at most half the weight)
for which the {\em size} (the cardinality of $S$) is at most
$O(n^{\epsilon})$, for some fixed $\epsilon\in(0,1)$
\cite{DBLP:journals/corr/DvorakN14}.
It is well known that hereditary graph classes with
treewidth $O(n^{\epsilon})$, for some fixed $\epsilon\in(0,1)$, have pathwidth $O(n^\epsilon)$ (see
\cite{Bodlaender:1998:PKA:292475.292476} for example),
so graphs excluding a fixed minor have pathwidth $O(\sqrt{n})$ and
graphs with $O(n^\epsilon)$-size balanced separators have pathwidth
$O(n^\epsilon)$.  This implies:

\begin{itemize}
\item Graphs of bounded treewidth have $xy^+$-monotone
 EPG-rep\-re\-sen\-ta\-tions 
in an $O(\log n)\times O(n)$-grid.
\item Graphs excluding a fixed minor have  $xy^+$-monotone
 EPG-rep\-re\-sen\-ta\-tions in an $O(\sqrt{n})\times O(n)$-grid.
\item Hereditary classes of graphs with
$O(n^\epsilon)$-sized balanced separators for some $\epsilon\in(0,1]$ have  $xy^+$-monotone EPG-rep\-re\-sen\-ta\-tions
in an $O(n^\epsilon)\times O(n)$-grid.
\end{itemize}

The $O(n)$ area result for bounded pathwidth graphs is tight by Theorem
\ref{thm:numedg}. Naturally, one wonders if the other three results in
above are tight. The $O(\sqrt{n})\times O(n)$-grid
bound applies, for example, to all planar graphs
and more generally all bounded genus graphs.  Some planar graphs with $n$ vertices have pathwidth $\Omega(\sqrt{n})$
(the $\sqrt{n}\times \sqrt{n}$-grid is one example), so the height cannot
be improved. 
But can the width or the area be improved? 
This turns out to be true for some graph classes
if the monotonicity condition is dropped.  In the next section,
we show these improved bounds
via a detour into orthogonal drawings.
Some of these results are tight.


\section{EPG-rep\-re\-sen\-ta\-tions via orthogonal drawings}\label{sec:epg}

In this section, we study another method of obtaining EPG-rep\-re\-sen\-ta\-tions,
which gives (for some graph classes) even smaller EPG-rep\-re\-sen\-ta\-tions.
Define a {\em 4-graph} to be a graph where all vertices have degree at most
4.  An {\em orthogonal drawing} of a 4-graph is an assignment of grid-points
to vertices and grid-paths to edges such that the path of each edge connects
the grid points of its end-vertices.  Edges are allowed to intersect, but any such
intersection point must be a true intersection, i.e., one edge uses only
horizontal grid-edges while the other uses only vertical grid-edges at
the intersection point.

\begin{lemma}
\label{lem:orthTransform}
Let $G$ be a 4-graph that has an orthogonal drawing in a $w\times h$-grid.
Then any minor of $G$ has an EPG-rep\-re\-sen\-ta\-tion in a $2w\times 2h$-grid.
\end{lemma}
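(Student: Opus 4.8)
The plan is to work directly in the refined $2w\times 2h$-grid obtained by placing a new grid-line between every pair of consecutive lines of the given drawing, so that each original point $(i,j)$ becomes $(2i,2j)$, each drawn edge-segment runs along the resulting even lines, and the odd lines are left free for routing. Let $H$ be the given minor, specified by pairwise-disjoint connected branch sets $B_1,\dots,B_t\subseteq V(G)$ together with, for every edge of $H$, a witnessing edge of $G$ joining the corresponding branch sets. For each branch set $B_x$ I would fix a spanning tree $T_x$ of the connected subdrawing formed by the vertex-points of $B_x$ and the drawn edges of $G$ internal to $B_x$, and then let $\path(x)$ be the boundary walk of a thin (half-an-original-unit) thickening of $T_x$. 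Because the thickening is by half a unit, this boundary lives exactly on the odd lines of the refined grid, i.e.\ inside the $2w\times 2h$-grid, and it traces a single curve that hugs $T_x$, passing alongside every tree-edge and around every vertex-point of $B_x$; cutting this closed curve at one point yields the simple path I take as the vertex-path of $x$.

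To realize the edges I would proceed edge by edge. Fix an edge of $H$ witnessed by a $G$-edge $e=(u,v)$ with $u\in B_x$ and $v\in B_y$. Since $u\in B_x$ the walk $\path(x)$ runs alongside the $u$-end of $e$, and likewise $\path(y)$ runs alongside the $v$-end; using the half-unit slack on the appropriate side of $e$ I would route the two walks so that they coincide along exactly one grid-edge of the refined grid, creating the shared grid-edge that represents this edge of $H$. For a $G$-edge that does \emph{not} witness an edge of $H$ (a deleted edge, or an internal edge of a single branch set not used in $T_x$), and for every crossing of the original drawing, the incident walks are kept on opposite sides and therefore meet, if at all, only transversally at a grid-point; such a proper crossing shares no grid-edge and so creates no adjacency. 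The vertex- and edge-deletions of the minor are handled trivially by omitting the corresponding walks and shared grid-edges.

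It then remains to verify the two defining properties of an EPG-representation: that each $\path(x)$ is a genuine simple path, and that two walks share a grid-edge if and only if the corresponding vertices are adjacent in $H$. I expect the main obstacle to be exactly this bookkeeping in the presence of the crossings permitted by an orthogonal drawing. The key point is that every meeting of two (possibly equal) branch-set boundaries occurs at a point where one incident drawn edge is horizontal and the other vertical, so after offsetting the boundaries by half a unit the two strands pass each other at distinct corner-points and never run along a common grid-edge unless we deliberately make them do so at a witnessing edge. Establishing that this half-unit offset can be chosen consistently around every vertex of degree up to $4$ and at every crossing—so that each boundary walk closes up without self-overlap and distinct walks overlap precisely at the designated grid-edges—is the crux; the degree bound $4$ and the single doubling of the grid are exactly what provide enough room for this to succeed. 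One could alternatively phrase the construction as first producing a proper VPG-representation at the half-integer scale and then invoking Lemma~\ref{lem:transform}, but since the half-integer refinement already accounts for the factor of two, it is cleaner to build the shared grid-edges directly.
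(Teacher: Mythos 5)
Your high-level plan is in fact the paper's plan: in a doubled grid, each branch set is represented by a closed walk tracing around a thickened spanning tree of its drawn subgraph, each edge of the minor is realized by letting one walk run alongside the witnessing drawn edge and hook onto a single grid-edge of the other walk, and all other meetings are transversal crossings (the paper builds the same walks incrementally: a square per vertex, a two-sided ``tube'' per kept edge, and a merge of two closed walks at every contraction). The genuine gap is at exactly the step you flag as the crux, and it is not a mere bookkeeping issue: a \emph{symmetric} half-unit thickening, with boundaries on the odd lines of the refined grid, does not fit in a $2w\times 2h$-grid without collisions. If the drawing contains two vertex-points at distance $1$, say $u=(i,j)$ and $v=(i+1,j)$ lying in different branch sets (joined by a deleted edge, or by no edge at all), then both boundary walks contain the entire vertical segment on the line $x=i+1/2$ between them; they overlap along grid-edges rather than crossing at corner points, creating a spurious adjacency in the representation of $H$. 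The same failure occurs for two parallel edge-segments drawn on adjacent grid lines (both tube boundaries land on the common half-integer line) and for an edge passing at distance $1$ from a vertex-point. These configurations are not degenerate: compact orthogonal drawings, which are precisely what this lemma is later applied to, pack features at unit distance everywhere. So the assertion that ``the single doubling of the grid provides enough room'' is false for a symmetric offset; no consistent choice of sides rescues it, because the collision is forced by the geometry, not by the choices.

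The repair is to thicken \emph{one-sidedly}, which is what the paper does: insert a new grid line after each original one, keep every drawn feature on its original lines, duplicate each edge segment only into its own adjacent new line, and let each vertex square occupy its original lines together with the adjacent new ones. Then every original line ``owns'' its new line, distinct features occupy disjoint sets of lines, and the only grid-edge coincidences are the deliberately created hooks. Two further points need care. First, cutting a closed curve ``at one point'' does not produce a path in the grid graph (the two ends coincide at a grid-point); one must delete a grid-edge, and it must be a grid-edge not shared with any other vertex-path. In the paper's tight construction a degree-$4$ vertex whose four incident edges all hook onto its square has no such edge, which is why the paper orients the edges (along an Eulerian cycle, or along a spanning tree rooted at a vertex of degree at most $3$) so that every such vertex has an outgoing edge and hence an unshared tube edge available for deletion; your construction needs an analogous global choice of which walk detours along each witnessing edge. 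Second, the resulting walks are in general not simple: when two edges of the same spanning tree cross in the (non-planar) drawing, the boundary walk crosses itself, so ``simple path'' should be weakened to a walk that repeats no grid-edge, as in the paper.
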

\begin{proof}
First delete from the orthogonal drawing all edges of $G$ that are not needed 
for the minor $H$; this cannot increase the grid-size.  So we may assume
that $H$ is obtained from $G$ via edge contractions only.

We first explain how to obtain an EPG-rep\-re\-sen\-ta\-tion of $G$.
Double the grid by inserting a new row/column after each existing one.
Every grid-point that belonged to a vertex $v$ hence now corresponds to 4
grid-points that form a unit square; denote this by $\square_v$.
Duplicate all segments of grid-paths for edges in the adjacent new grid-line, 
and extend/shorten suitably so that the copies again form grid-paths, 
connecting the squares of their end.  Thus for each edge $(v,w)$ we now
have two grid-paths $P_{v,w}^1$ and $P_{v,w}^2$ from $\square_v$ to 
$\square_w$.

We now define $\path(v)$ (which will be a closed path) by tracing
the edges of the orthogonal drawing suitably.  To describe this in more
detail, first arbitrarily direct the edges of $G$.    
Initially, $\path(v)$ is simply the boundary of $\square_v$.  Now
consider each edge $(v,w)$ incident to $v$.  If it is directed $v\rightarrow w$,
then remove from $\path(v)$ the grid-edge 
along $\square_v$ that connects the two ends of
$P_{v,w}^1$ and $P_{v,w}^2$, add these two grid-paths, and add the
grid-edge $e'$ along $\square_w$ that connects these two paths. 
Note that $e'$ also belongs to $\path(w)$, so with this $\path(v)$ and 
$\path(w)$ share a grid-edge and we obtain the desired EPG-rep\-re\-sen\-ta\-tion
of $G$.

\begin{figure}[ht]
\hspace*{\fill}
\begin{subfigure}[b]{0.31\linewidth}
	\includegraphics[width=\linewidth,page=1]{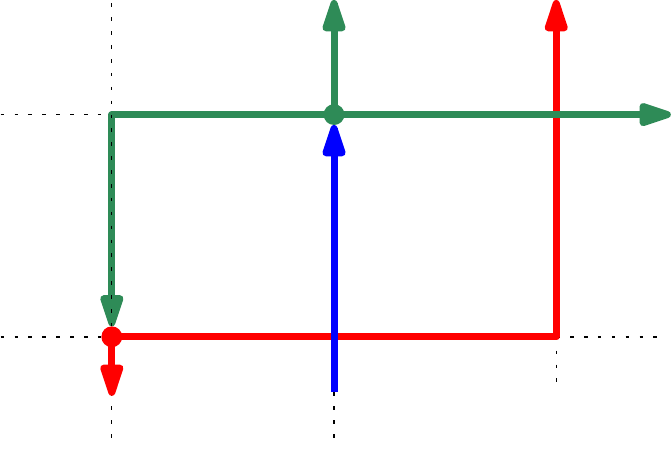}
	\caption{}
\end{subfigure}
\hspace*{\fill}
\begin{subfigure}[b]{0.31\linewidth}
	\includegraphics[width=\linewidth,page=2]{orthTransform.pdf}
	\caption{}
\end{subfigure}
\hspace*{\fill}
\begin{subfigure}[b]{0.31\linewidth}
	\includegraphics[width=\linewidth,page=3]{orthTransform.pdf}
	\caption{}
\end{subfigure}
\hspace*{\fill}
\caption{Transforming an orthogonal drawing into an EPG-rep\-re\-sen\-ta\-tion.
For ease of reading we show the duplicated grid-line close to the original one.}
\label{fig:transform}
\end{figure}

It remains to argue that this can be turned into an EPG-rep\-re\-sen\-ta\-tion
of a graph $H$ obtained from $G$ via edge contractions.  Suppose we want to
contract edge $(v,w)$.  The two grid-paths $\path(v)$ and $\path(w)$ share a grid-edge $e$
that belongs to no other vertex-path.
Delete $e$ from both paths, and let the path of the contraction-vertex
be the union of the two 
resulting open paths, which is again a closed path.  Thus we obtain
an EPG-rep\-re\-sen\-ta\-tion of $H$ where all vertex-paths are closed paths.

If desired, we can turn this into an EPG-rep\-re\-sen\-ta\-tion with open paths
by deleting for every $v\in V$ one grid-edge from $\path(v)$
that is not shared with any other vertex-path.  If $\deg(v)\leq 3$, then
a suitable edge is the grid-edge of $\square_v$ on the side where no edge 
attaches.  If $v$ has an outgoing edge $v\rightarrow w$, then a suitable
edge is any grid-edge of $P_{v,w}^1$.    We can achieve that one of these
always holds as follows:
If all vertex degrees of $G$ are 4, then direct $G$ by walking along
an Eulerian cycle; then all vertices have outgoing edges.  
If some vertex $v$ has degree 3 or less, then find
a spanning tree $T$ of $G$, root it at $v$, direct all tree-edges towards
the root and all other arbitrarily.  Either way, this direction satisfies
that any vertex of degree 4 has at least one outgoing edge and we can
delete an edge of each $\path(v)$ such that all vertex-paths are open paths. \qed
\end{proof}

We note that a somewhat similar transformation from orthogonal drawings
was used recently to create pixel-representations \cite{ABR+15}, but in
contrast to their result we do not need the orthogonal drawings to be
planar. 
We use this lemma 
to obtain small EPG-rep\-re\-sen\-ta\-tions
for a number of graph classes (we will not give formal definitions
of these graph classes; see \cite{Die12}).

\begin{corollary}
All graphs of bounded treewidth 
(in particular, trees, outer-planar graphs and series-parallel graphs)
have an EPG-rep\-re\-sen\-ta\-tion
in $O(n)$ area. Graphs of bounded genus have an EPG-rep\-re\-sen\-ta\-tion 
in 
$O(n\log^2n)$ area.
\end{corollary}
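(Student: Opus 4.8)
The plan is to realize each graph $G$ in the class as a minor of a bounded-degree graph $G'$ on only $O(n)$ vertices, draw $G'$ orthogonally in small area, and then invoke Lemma~\ref{lem:orthTransform}. First I would bound the number of edges: a treewidth-$k$ graph has $m\le kn=O(n)$ edges, and a genus-$g$ graph has $m\le 3n+6g=O(n)$ edges, so in both cases $m=O(n)$. Next I would reduce the maximum degree: replace every vertex $v$ of degree $d_v$ by a cycle $C_v$ on $d_v$ vertices, attaching the (at most one) original edge incident to each cycle-vertex, so that every vertex of the resulting graph $G'$ has degree at most $3$. Contracting each $C_v$ back to a single vertex recovers $G$, hence $G$ is a minor of the $4$-graph $G'$, and $|V(G')|=\sum_v d_v=2m=O(n)$.

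The second step is to show that $G'$ inherits a good recursive separator theorem, i.e., that $G'$ and all of its subgraphs admit balanced vertex separators of the required size. For bounded genus this is clean: choosing the cyclic order of each $C_v$ to agree with a fixed genus-$g$ embedding of $G$ keeps $G'$ embeddable on the same surface, so $G'$ (and every subgraph, genus being subgraph-monotone) has balanced separators of size $O(\sqrt{g\,|V(G')|})=O(\sqrt n)$, by the separator theorem of Gilbert, Hutchinson and Tarjan for bounded-genus graphs. For bounded treewidth I would instead argue that the cyclic orders of the $C_v$ can be chosen consistently with a fixed width-$k$ tree decomposition of $G$ so that $\mathrm{tw}(G')=O(k)$; since treewidth is monotone under subgraphs and a treewidth-$t$ graph has a balanced separator of size $t+1$, this gives $G'$ balanced separators of size $O(k)=O(1)$.

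With bounded degree and a recursive separator theorem in hand, the third step applies the VLSI graph-layout results of Leiserson and Valiant: a bounded-degree graph on $N$ vertices whose subgraphs have balanced separators of size $O(N^{1/2-\epsilon})$ can be drawn orthogonally in area $O(N)$, while separators of size $O(\sqrt N)$ yield area $O(N\log^2 N)$. Applied to $G'$ with $N=O(n)$, this gives an orthogonal drawing of area $O(n)$ in the bounded-treewidth case and $O(n\log^2 n)$ in the bounded-genus case. Finally, since $G$ is a minor of $G'$, Lemma~\ref{lem:orthTransform} converts a $w\times h$ orthogonal drawing of $G'$ into an EPG-representation of $G$ in a $2w\times 2h$-grid, so the area bounds $O(n)$ and $O(n\log^2 n)$ carry over, proving the corollary.

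I expect the main obstacle to be the treewidth half of the second step: the naive degree reduction does not automatically preserve small separators, since deleting a high-degree vertex of $G$ corresponds to deleting an entire long cycle of $G'$, and an adversarial cyclic order can blow up the treewidth (for instance, a ``crossing'' matching between two replacement cycles can force treewidth $\Omega(\sqrt n)$). The fix is to pick each cyclic order from the decomposition---routing the attachment points of $C_v$ along a depth-first traversal of the connected subtree of bags containing $v$---and then to exhibit a tree decomposition of $G'$ of width $O(k)$. Verifying this width bound is the only nonroutine part of the argument, and the genus case, where the fixed embedding supplies the orders for free, serves as a sanity check that the overall scheme is sound.
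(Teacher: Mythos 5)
Your proposal follows essentially the same route as the paper: reduce $G$ to a bounded-degree graph on $O(n)$ vertices (using $m=O(n)$ for both classes) in a way that preserves treewidth, resp.\ genus, apply Leiserson's VLSI-layout construction to obtain an orthogonal drawing of area $O(n)$, resp.\ $O(n\log^2 n)$, and finish with Lemma~\ref{lem:orthTransform} since $G$ is a minor of the expanded graph. The only differences are cosmetic---you replace each vertex by a cycle whereas the paper splits off three neighbours at a time into a new vertex, and the treewidth-preservation step you rightly flag as the nonroutine part is exactly what the paper outsources to its citation of Markov and Shi (whose result is precisely about such constant-degree expansions).
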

\begin{proof}
Let $G$ be one such graph for which we wish to obtain the
EPG-rep\-re\-sen\-ta\-tion.
$G$ may not be a 4-graph, but we can turn it into a 4-graph by {\em 
vertex-splitting}, defined as follows.  Let $v$ be a vertex with 5 or
more neighbours $w_1,\dots,w_d$.   Create a new vertex $v'$, which is
adjacent to $w_1,w_2,w_3$ and $v$, and delete the edge $(v,w_i)$ for $i=1,2,3$.
Observe that $\deg(v')=4$ and $\deg(v)$ is reduced by 2, so sufficient
repetition ensures that all vertex degrees are at most 4.   Let $H$
be the resulting graph, and observe that $G$ is a minor of $H$.

Every vertex $v$ of $G$ gives rise to at most $\deg(v)/2$ new vertices
in $H$, so $H$ has at most $n+m$ vertices.  Since graphs of bounded
treewidth have $O(n)$ edges and graphs of bounded genus have $O(n)$ edges, therefore
$H$ has $O(n)$ vertices. Markov and Shi~\cite{markov.shi:constant} argued
that the splitting can be done in such a way that $tw(H)\leq tw(G)+1$.
It is also not hard to see
that with a suitable way of splitting, one can ensure that in the case of
bounded genus graphs the graph $H$ obtained by splitting has the same genus. 

By Leiserson's construction \cite{Lei80}, 4-graphs of bounded treewidth have
an orthogonal drawing in $O(n)$ area and those of bounded genus have an orthogonal
drawing in $O(n\log^2 n)$ area.  
\qed
\end{proof}

For classes of 4-graphs, Lemma~\ref{lem:orthTransform} and
Leiserson's construction \cite{Lei80} give directly the following
stronger results:

\begin{corollary}\label{cor:4graphsEPG}
Hereditary classes of $4$-graphs that have balanced separators of size
$O(n^\epsilon)$ with $\epsilon <1/2$ have EPG-rep\-re\-sen\-ta\-tion
in $O(n)$ area. Hereditary classes of $4$-graphs that have balanced separators of size $O(n^\epsilon)$ with $\epsilon >1/2$ have EPG-rep\-re\-sen\-ta\-tion
in $O(n^{2\epsilon})$ area.
\end{corollary}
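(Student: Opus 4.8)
The plan is to feed the graphs into Leiserson's separator-based orthogonal layout and then convert the resulting drawing with Lemma~\ref{lem:orthTransform}, exactly as was done for bounded treewidth and bounded genus in the previous corollary. Let $\mathcal{C}$ be a hereditary class of $4$-graphs with balanced separators of size $O(n^\epsilon)$, and let $G\in\mathcal{C}$ have $n$ vertices. The first step is to verify that $\mathcal{C}$ supplies a \emph{recursive} separator family in the sense Leiserson requires: because $\mathcal{C}$ is hereditary, every subgraph arising in a divide-and-conquer decomposition of $G$ again lies in $\mathcal{C}$, is again a $4$-graph, and hence on its $n'$ vertices again admits a balanced separator of size $O((n')^\epsilon)$. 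Thus the recursion can be carried out to the bottom, the maximum degree stays bounded by $4$ at every level (subgraphs of $4$-graphs are $4$-graphs), and the hypotheses of the layout theorem are met.

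The second step is to invoke Leiserson's layout theorem \cite{Lei80}. Splitting $G$ along a separator of size $s(n)=O(n^\epsilon)$ into parts each of at most a constant fraction of the vertices, laying out each part recursively inside a bounding box, and routing the $O(n^\epsilon)$ separator wires in a channel between the two sub-layouts yields a recurrence for the layout area. As Leiserson shows, this recurrence exhibits the standard trichotomy: when $\epsilon<1/2$ the cost is dominated by the $\Omega(n)$ vertices, giving an orthogonal drawing of area $O(n)$; when $\epsilon>1/2$ the cost is dominated by the top-level routing, giving area $O(n^{2\epsilon})$. The borderline case $\epsilon=1/2$, which incurs the extra $\log^2 n$ factor, is precisely the bounded-genus regime already treated in the previous corollary and is deliberately excluded here.

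The final step is routine. A $4$-graph $G$ is trivially a minor of itself, so Lemma~\ref{lem:orthTransform} turns the orthogonal drawing in a $w\times h$-grid into an EPG-representation in a $2w\times 2h$-grid. This at most quadruples the area and hence preserves both the $O(n)$ bound (for $\epsilon<1/2$) and the $O(n^{2\epsilon})$ bound (for $\epsilon>1/2$), which is exactly the statement of the corollary.

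The main obstacle is the second step, and specifically applying Leiserson's theorem with the correct reading of ``balanced separator''. One must confirm that the $O(n^\epsilon)$-separator hypothesis, applied recursively through heredity, produces precisely the area recurrence whose solution is the stated trichotomy, and that the bounded-degree requirement for orthogonal routing survives every level of the recursion. Both points reduce to the heredity and the $4$-graph assumptions; once they are in place, the conversion of the third step is immediate from Lemma~\ref{lem:orthTransform}.
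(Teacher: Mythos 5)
Your proposal is correct and follows exactly the paper's intended argument: the paper gives no written proof beyond saying that Lemma~\ref{lem:orthTransform} and Leiserson's construction ``give directly'' the result, and your three steps (heredity plus the 4-graph property ensure Leiserson's recursive separator hypothesis holds, Leiserson's area trichotomy yields $O(n)$ for $\epsilon<1/2$ and $O(n^{2\epsilon})$ for $\epsilon>1/2$, and $G$ being a minor of itself lets Lemma~\ref{lem:orthTransform} convert the drawing with only a constant-factor area blowup) are precisely the details being elided. No genuinely different route is taken; you have simply made the paper's one-line justification explicit.
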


The first bound in Corollary~\ref{cor:4graphsEPG} is tight thanks
to Theorem~\ref{thm:numedg}. The second bound is tight thanks
to Corollary~\ref{cor:pathwidth} and the fact that there are such
classes of graphs which contain triangle-free graphs of pathwidth
$\Omega(n^\epsilon)$, for example the class of finite 4-graphs that are subgraphs of the 3D
integer grid with $\epsilon=2/3$.

\iffull
\section{Conclusion}\label{sec:conclusion}

In this paper, we study the problem of creating EPG-rep\-re\-sen\-ta\-tions
of graphs, with the objective of small area of the supporting
grid.  A very simple construction shows that an $O(n)\times O(n)$-grid
is feasible for all graphs.  We show that a smaller grid can be achieved
only if the pathwidth is smaller, presuming at most two vertices may use
a grid-edge. Vice versa, graphs of small pathwidth
can be drawn in a small grid.  Via a detour into orthogonal drawings, we
also argue that graphs of bounded treewidth and planar graphs have
EPG-rep\-re\-sen\-ta\-tions where the grid-size is $O(\sqrt{n})\times O(\sqrt{n})$
and $O(\sqrt{n}\log n)\times O(\sqrt{n}\log n)$, respectively.

As for open problems, the main question is for what other graph classes
we can create EPG-rep\-re\-sen\-ta\-tions of smaller area.  If we restrict
our attention to EPG-rep\-re\-sen\-ta\-tions where at most two edges may use
a grid-edge, then the pathwidth seems the be the right measure for
the height of such representations.  But if many vertex-paths are
allowed to share a grid-edge, then sometimes EPG-rep\-re\-sen\-ta\-tions of
much smaller height are possible, e.g. for the complete graph.  We
suspect that the {\em clique-cover-number}, i.e., the minimum number
of cliques needed to cover all edges, may be a relevant parameter here.
\fi

\newpage
\bibliographystyle{splncs03}
\bibliography{bibs}




\end{document}